\documentclass[12pt]{article}
\usepackage{amsmath,amssymb,amsthm}
\usepackage{mathbbol,bbm}
\usepackage{graphicx,float}
\usepackage[final]{showkeys}
\usepackage{bm}

\newcommand{\Cx}{{\mathbb C}}

\newcommand{\idty}{\b 1}
\DeclareMathOperator{\id}{id}

 \DeclareMathOperator{\tr}{Tr}
\newcommand{\<}{\langle}
\renewcommand{\>}{\rangle}
\providecommand{\abs}[1]{\lvert#1\rvert}
\providecommand{\norm}[1]{\lVert#1\rVert}
\DeclareMathOperator*{\loplus}{\mbox{\Large\mbox{$\oplus$}}}

\renewcommand{\b}[1]{\mathbf{#1}}
\newcommand{\bi}[1]{\boldsymbol{#1}}
\renewcommand{\c}[1]{\mathcal{#1}}

\newcommand{\s}[1]{\mathsf{#1}}
\renewcommand{\r}[1]{\mathrm{#1}}

\newtheorem{theorem}{Theorem}
\newtheorem{lemma}{Lemma}
\newtheorem{proposition}{Proposition}

\begin{document}
\setlength{\parskip}{6pt}

\begin{center}
{\LARGE On thermalization in Kitaev's 2D model} \\[12pt]
R.~Alicki$^{\dagger,\sharp}$, M.~Fannes$^\ddagger$ and M.~Horodecki$^{\dagger,\sharp}$
\\[6pt]
$^\dagger$ National Quantum Information Centre of Gda\'nsk, Poland \\[6pt]
$^\sharp$ Institute of Theoretical Physics and Astrophysics \\
University of Gda\'nsk, Poland \\[6pt]
$^\ddagger$ Instituut voor Theoretische Fysica \\
K.U.Leuven, Belgium
\end{center}

\medskip\noindent
\textbf{Abstract}

\noindent 
The thermalization process of the 2D Kitaev model is studied within the Markovian weak coupling approximation. It is shown that its largest relaxation time is bounded from above by a constant independent of the system size and proportional to $\exp(2\Delta/kT)$ where $\Delta$ is an energy gap over the 4-fold degenerate ground state. This means that the 2D Kitaev model is not an example of a memory, neither quantum nor classical.

\section{Introduction}
\label{s1}

The fragility of genuine quantum states in the presence of interaction with an environment, in contrast to those which allow a classical interpretation, represents the main challenge for the large scale implementation of the ideas of quantum information processing.

It is well-known that to improve the stability of quantum information processing the logical qubits should  be implemented in many-particle systems, typically $N$ physical spins per logical qubit. The logical qubits should be stable objects with efficient methods of state preparation, measurements and application of gates. By efficiency we mean certain scaling behaviour, e.g.\ the lifetime of a logical qubit should grow exponentially with $N$ while the number of steps needed to apply a gate and a measurement should at most increase polynomially with $N$.

There exist different sources of noise, some of them of macroscopic origin and often highly system-dependent. They can, at least in principle, be substantially reduced by proper engineering. However, one source of noise is inescapable: the microscopic interactions of the physical spins with thermal particles or excitations of the local environment.

The analogous situation for classical information processing is well-under\-stood. E.g.\ ferromagnetic interactions stabilize bits encoded is small magnetic domains which can then be flipped by external magnetic fields. The existence of a similar mechanism for quantum information is still open to debate.

As a first step, one is interested in designing a stable quantum memory, i.e.\ a $N$- particle system which can support at least a single encoded logical qubit for a long time, preferably growing exponentially with $N$. The most promising theoretical candidates are provided by quantum spin systems which exhibit topological order, such as the 2D \cite{KitaevFT} and 4D Kitaev models \cite{DennisKLP2002-4DKitaev}. The issue of thermal stability of these models has recently attracted much attention (see e.g.\cite{AlickiH2006-drive,Alicki-Fannes-Horodecki,NussinovO2007b,Kay2008-K2D}). The aim of this paper is to rigorously analyze the thermal stability of the 2D Kitaev model. 

The quantum theory of open systems developed more than thirty years ago provides a natural framework for studying stability in the presence of thermal noise.  The particularly simple properties of Kitaev's model allow to apply the most developed tools in the theory of quantum open systems namely Davies's theory of semi-groups of completely positive maps which describes the dynamics of a quantum system weakly interacting with a heat bath in the Markovian approximation. Such evolutions satisfy the quantum detailed balance condition which permits a rather detailed analysis of the relaxation properties of the relevant degrees of freedom.

We first develop some technical tools which provide upper bounds for the relaxation times of observables. Then we apply this to the study two examples: a 1D quantum Ising model, which can be seen as a 1D Kitaev model, and the 2D  Kitaev model, both coupled to a heat bath. We show that the relaxation times of the observables which are supposed to encode the qubits are bounded from above by a constant, independent of the system size. Therefore, neither the 1D nor the 2D model provides a stable quantum memory. Although this result was expected, the most important issue is the mathematical technique which is developed here, it could be applied to analyze other models such as the 4D Kitaev model, as well. The method relies on a Hamiltonian picture of the dissipation where estimating the longest relaxation time is shown to be equivalent to estimate the ground state spectral gap for a very particular quantum spin Hamiltonian. These techniques, together with those developed in our previous paper \cite{Alicki-Fannes-Horodecki} on this subject could be applied to analyze other models.

The paper is organized as follows: in Section~\ref{s2} we briefly remind the origins and the main properties of Davies generators as the quantum analogues of classical Glauber dynamics. Section~\ref{s3} relates the lifetime of information encoded in a relaxing system to the spectral gap of the dissipative generator. In Section~\ref{s4} a Hamiltonian picture of relaxation is obtained. The generator is recast into a master Hamiltonian on Liouville space such that the gap of the generator is recovered as the ground state gap of the master Hamiltonian. Section~\ref{s5} deals with a simple model: the 1D Ising model on a ring which can encode 1 topological qubit in its ground state. A rigorous argument is developed to show that the stability of this model when coupled to a thermal environment does not increase with the system size. Finally, in Section~\ref{s6} Kitaev's 2D model is shown to behave in a similar way as the 1D Ising model with respect to thermal stability.
\section{Davies generators}
\label{s2}

Davies generators provide a simple and realistic description of a system weakly coupled to a thermal environment. The interaction with the outside world is so small that the system doesn't loose its identity and can still be considered on itself. The footprint of the thermal environment is in the reduced dynamics: a Markovian semi-group of unity preserving completely positive maps of a very specific   type with the temperature encoded. The aim of this section is to briefly recall the general form and properties of such maps, henceforth called Davies maps, see~\cite{Davies1974-WCL} for the original rigorous analysis of the weak coupling limit.

A small system, with a finite number of states, is coupled to one or more heat baths at a same inverse temperature $\beta$ leading to a total Hamiltonian
\begin{equation} 
H = H^{\text{sys}} + H^{\text{bath}} + H^{\text{int}} \enskip\text{with}\enskip H^{\text{int}} = \sum_\alpha S_\alpha \otimes f_\alpha,
\end{equation} 
where the $S_\alpha$ are system operators and the $f_\alpha$ bath operators. Both the coupling operators $S_\alpha$ and $f_\alpha$ are assumed to be Hermitian. An important ingredient is the Fourier transform $\hat g_\alpha$ of the auto-correlation function of $f_\alpha$. The function $\hat g_\alpha$ describes the rate at which the coupling is able to transfer energy between the bath and the system. Often a minimal coupling to the bath is chosen, minimal in the sense that the interaction part of the Hamiltonian is as simple as possible but still addresses all energy levels of the system Hamiltonian in order to yield an ergodic reduced dynamics. The necessary and sufficient conditions for ergodicity have been obtained in~\cite{Spohn1977-ergod,Frigerio1978-ergod}
\begin{equation} 
\bigl\{ S_\alpha, H^{\text{sys}} \bigr\}' = \Cx\, \idty, 
\end{equation} 
i.e.\ no system operator apart from the multiples of the identity commutes with all the $S_\alpha$ and $H^{\text{sys}}$.

The weak coupling limit results in a Markovian evolution for the system
\begin{equation}
\frac{dX}{dt} = \c G(X) = i\delta(X) + \c L(X). 
\label{reddyn}
\end{equation}
The generator is a sum of two terms, the first is a usual Liouville-von~Neumann derivation as in standard quantum mechanics while the second is a particular type of Lindblad generator
\begin{align}
\delta(X) 
&= [H^{\text{sys}},X] \\
\c L(X) 
&= \sum_\alpha \sum_{\omega\ge0} \c L_{\alpha\,\omega}(X)
\label{dis} \\
&= \sum_\alpha \sum_{\omega\ge0} \hat g_\alpha(\omega)\, \Bigl\{ \bigl(S_\alpha(\omega)\bigr)^\dagger\, \bigl[X \,,\, S_\alpha(\omega) \bigr] + \bigl[ \bigl( S_\alpha(\omega) \bigr)^\dagger \,,\, X \bigr]\,  S_\alpha(\omega) 
\nonumber \\
&\phantom{=\ } + \r e^{-\beta\omega}\, S_\alpha(\omega)\, \bigl[ X \,,\, \bigl( S_\alpha(\omega) \bigr)^\dagger \bigr] + \r e^{-\beta\omega}\, \bigl[ S_\alpha(\omega) \,,\, X \bigr]\, \bigl( S_\alpha(\omega) \bigr)^\dagger \Bigr\}.
\label{dav}
\end{align}
Here the $S_\alpha(\omega)$ are the Fourier components of $S_\alpha$ as it evolves under the Hamiltonian system evolution
\begin{equation}
\r e^{itH^{\text{sys}}}\, S_\alpha\, \r e^{-itH^{\r{sys}}} = \sum_\omega S_\alpha(\omega)\, \r e^{-i\omega t}, 
\label{salpha}
\end{equation}
where the $\omega$'s are the Bohr frequencies of the system Hamiltonian. The Hermiticity of $S_\alpha$ is equivalent with
\begin{equation} 
\bigl( S_\alpha(\omega) \bigr)^\dagger = S_\alpha(-\omega).
\end{equation} 
Clearly, the temperature of the environment appears in~(\ref{dav}) through the Boltzmann factor. Such generators are called Davies generators in the sequel.

A super-operator $\c L$ as in~(\ref{reddyn}) generates a semi-group of completely positive identity preserving transformations of the system. However, due to its specific form, it enjoys a number of important additional properties \\
$\bullet$\enskip the canonical Gibbs state is stationary
\begin{equation} 
\tr \Bigl( \rho_\beta\, \r e^{t\c G}(X) \Bigr) = \tr \bigl( \rho_\beta\, X \bigr) \enskip\text{with}\enskip \rho_\beta = \frac{\r e^{-\beta H^{\r{sys}}}}{\tr \Bigl( \r e^{-\beta H^{\r{sys}}} \Bigr)}, 
\end{equation} 
$\bullet$\enskip the semi-group is relaxing: any initial state $\rho$ evolves to $\rho_\beta$
\begin{equation} 
\lim_{t\to\infty}\ \tr \Bigl( \rho\, \r e^{t\c G}(X) \Bigr) = \tr \bigl( \rho_\beta\, X \bigr), 
\end{equation} 
$\bullet$\enskip each of the terms $\c L_{\alpha\omega}$, and therefore $\c L$ as well, satisfies the detailed balance condition, often called reversibility
\begin{equation}
[\delta, \c L_{\alpha\omega}] = 0 \enskip\text{and}\enskip \tr \Bigl(\rho_\beta\, Y^\dagger\, \c L_{\alpha\omega}(X) \Bigr) = \tr \Bigl(\rho_\beta\, \bigl(\c L_{\alpha\omega}(Y)\bigr)^\dagger\, X \Bigr). 
\label{db}
\end{equation}

Equation~(\ref{db}) expresses the self-adjointness of $\c L_{\alpha\omega}$ with respect to the Liouville scalar product
\begin{equation} 
\<X \,,\, Y\>_\beta := \tr \rho_\beta\, X^\dagger\, Y. 
\end{equation} 
Writing $\c G = i\delta + \c L$, see~(\ref{reddyn}), is therefore decomposing $\c L$ into a Hermitian and a skew-Hermitian part and, by~(\ref{db}), $\c G$ is normal. The dissipative terms $\c L_{\alpha\omega}$ of the generator, being self-adjoint, are negative definite. This implies that we can always sandwich the spectrum of $\c L$ between spectra of generators where the $\hat g(\omega)$ have been replaced by constants.

The negativity of $\c L_{\alpha\omega}$ becomes manifest if we write
\begin{equation}
\begin{split}
- \bigl\< X \,,\, \c L_{\alpha\omega}(X) \bigr\>_\beta =
&\bigl\< \bigl[ S_\alpha(\omega) \,,\, X \bigr] \,,\, \bigl[ S_\alpha(\omega) \,,\, X \bigr] \bigr\>_\beta \\
&+ \r e^{-\beta\omega}\, \bigl\< \bigl[ S^\dagger_\alpha(\omega) \,,\, X \bigr] \,,\, \bigl[ S^\dagger_\alpha(\omega) \,,\, X \bigr]\bigr\>_\beta.
\end{split}
\label{liopos}
\end{equation}
To show this equality one uses that
\begin{equation} 
\rho_\beta\, S_\alpha(\omega) = \r e^{\beta\omega}\, S_\alpha(\omega)\, \rho_\beta, 
\end{equation} 
which is a direct consequence of~(\ref{salpha}).

\section{Relaxation times and memory}
\label{s3}

Due to ergodicity any initial state of a system whose dynamics is governed by a Davies generator will eventually relax to equilibrium. Information can be encoded by perturbing the equilibrium state of the system and, in order to retrieve this information, one has to single out observables that detect the perturbation of the state. The longer lived these observables, the more efficient the memory. The aim is indeed to increase the useful life of observables and states by devising proper Hamiltonians and encoding procedures. We are in this paper interested in the 2D Kitaev model of size $N$ and especially in the relation between lifetime and system size. We begin by arguing that the main point of interest is the temporal behaviour of auto-correlation functions (for a detailed analysis of relation 
between auto-correlation functions and the fidelity criterion see \cite{Alicki-Fannes-fidelity}).

To information encoded in an initial state $\rho$ we associate a complex observable $A$ through
\begin{equation} 
\rho = \rho_\beta (\idty + A) \enskip\text{with}\enskip \tr \bigl( \rho_\beta A \bigr) = 0. 
\end{equation} 
The observable $A^\dagger$ can then be used to recover this information as
\begin{equation} 
\tr \bigl( \rho\, A^\dagger \bigr) = \tr \bigl( \rho_\beta A\,A^\dagger\bigr) > 0. 
\end{equation} 
This detection, however, disappears in the long run
\begin{equation} 
\lim_{t\to\infty} \tr \Bigl( \rho\, \r e^{t\c G}(A^\dagger) \Bigr) = \tr \bigl( \rho_\beta A^\dagger \bigr) = 0.
\end{equation} 
By the detailed balance property of $\c L$ we have
\begin{equation} 
\tr \Bigl( \rho\, \r e^{t\c G}(A^\dagger) \Bigr) = \tr \Bigl( \rho_\beta A\, \r e^{t\c G}(A^\dagger) \Bigr) = \tr \Bigl( \rho_\beta\, \r e^{t\c L/2}(A)\, \r e^{t(i\delta + \c L/2)}(A^\dagger) \Bigr). 
\end{equation} 
We then apply Schwarz's inequality to obtain
\begin{align} 
\Bigl| \tr \Bigl( \rho_\beta\, A\, \r e^{t\c G}(A^\dagger) \Bigr) \Bigr|^2
&= \Bigl| \tr \Bigl( \rho_\beta\, \r e^{t\c L/2}(A)\, \r e^{t(i\delta + \c L/2)}(A^\dagger) \Bigr) \Bigr|^2 \\
&\le \Bigl\{ \tr \Bigl( \rho_\beta\, \r e^{t\c L/2}(A)\, \r e^{t\c L/2}(A^\dagger) \Bigr) \Bigr\}^2 \\
&= \Bigl\{ \tr \Bigl( \rho_\beta A\, \r e^{t\c L}(A^\dagger) \Bigr)
\Bigr\}^2.
\end{align} 
Therefore the time auto-correlation functions of observables evolving under the dissipative part of the dynamics alone determine the useful lifetime of observables. As a detailed balance generator $\c L$ is normal, the lifetime is determined by the smallest eigenvalue of $-\c L$ different from 0.

Generally, let $\c H$ be a subspace of observables containing $\idty$ and globally invariant under an ergodic detailed balance generator $\c L$, then
\begin{align}
&\r{Gap}\Bigl( \c G \bigr|_{\c H} \Bigr) 
\nonumber \\
&\quad:= \min\Bigl( \Bigl\{ -\lambda \,:\, 0 \ne \lambda \text{ eigenvalue of } \c L\bigr|_{\c H} \Bigr\} \Bigr)
\label{gap} \\
&\quad = \min\Bigl( \Bigl\{ -\< X \,,\, \c L(X) \>_\beta \,:\, X \in \c H,\ \norm X_\beta = 1, \text{ and } \< \idty \,,\, X \>_\beta = 0 \Bigr\} \Bigr). 
\end{align}

\section{Master Hamiltonians}
\label{s4}

For explicit computations it is convenient to rewrite the Davies generator in the standard Hilbert-Schmidt space. E.g., when dealing with composite systems, the Liouville scalar product introduces correlations between the different parties while the simplicity of the tensor structure is clearly visible in the Hilbert-Schmidt picture. Associating to an observable $X$ the vector
\begin{equation}
\varphi := X\, \rho_\beta^{\frac{1}{2}}, 
\label{uni}
\end{equation}
we pass from Liouville to Hilbert-Schmidt space
\begin{equation} 
\< \varphi\,,\,\psi \> := \tr \varphi^\dagger\psi. 
\end{equation} 
The map~(\ref{uni}) is unitary as $\<X \,,\, X\>_\beta = \<\varphi \,,\, \varphi\>$.

We can now unitarily transport the action of $-\c L_{\alpha\omega}$, see~(\ref{liopos}), on Liouville space to an action $k_{\alpha\omega}$ on Hilbert-Schmidt space
\begin{align}
k_{\alpha\omega} &= \bigl\{ S_\alpha(\omega)_L - \eta\, S_\alpha(\omega)_R \bigr\}^* \bigl\{ S_\alpha(\omega)_L - \eta\, S_\alpha(\omega)_R \bigr\}
\nonumber \\
&\phantom{= \bigl\{} + \bigl\{ S_\alpha(\omega)_R - \eta\, S_\alpha(\omega)_L \bigr\}^* \bigl\{ S_\alpha(\omega)_R - \eta\, S_\alpha(\omega)_L \bigr\}
\label{int} \\
&= \bigl\{ \bigl(S_\alpha(\omega)^\dagger\bigr)_L - \eta\, \bigl(S_\alpha(\omega)^\dagger\bigr)_R \bigr\} \bigl\{
S_\alpha(\omega)_L - \eta\, S_\alpha(\omega)_R \bigr\}
\nonumber \\
&\phantom{= \bigl\{} + \bigl\{ \bigl(S_\alpha(\omega)^\dagger\bigr)_R - \eta\, \bigl(S_\alpha(\omega)^\dagger\bigr)_L \bigr\} \bigl\{ S_\alpha(\omega)_R - \eta\, S_\alpha(\omega)_L \bigr\}.
\end{align}
Here we introduced left and right multiplication 
\begin{equation} 
X_L\, \varphi := X\,\varphi \enskip\text{and}\enskip X_R\, \varphi := \varphi\, X 
\end{equation} 
and 
\begin{equation}
\eta := \exp(-\beta\omega/2).
\label{eq:eta}
\end{equation} 
A $\ast$ was also used instead of a $\dagger$ to distinguish between the Hermitian conjugates for the usual and for the Hilbert-Schmidt inner product.

By~(\ref{int}) each $k_{\alpha\omega}$ is manifestly positive and the same is true for the sum
\begin{equation}
K = \sum_\alpha \sum_{\omega \ge 0} k_{\alpha\omega}.
\label{ham:master}
\end{equation}
The operator $K$ is minus the unitary transform of the Davies generator $\c L$ under the map~(\ref{uni}). In particular $K$ and $-\c L$ have the same eigenvalues, taking multiplicities into account.

We have now reached a Hamiltonian picture of $\c L$ in terms of $K$ which is a sum of positive contributions with the remarkable property that these contributions have a common zero energy vector, the identity operator seen as the vector $\rho_{\beta}^{1/2}$ in Hilbert-Schmidt space. This is rather unusual as the different terms in~(\ref{ham:master}) generally don't commute, implying that $K$ is a truly quantum Hamiltonian. This fictitious Hamiltonian, which will always be denoted by $K$, should be distinguished from the system Hamiltonian $H^{\text{sys}}$, we will call it master Hamiltonian. The positive constants $\hat g$ are not very relevant, we can in fact replace the $\hat g$ by arbitrary positive constants without changing the ground state of $K$. Even more, the full spectrum of $K$ can be sandwiched between spectra of $K$'s with modified constants.

As argued at the end of Section~\ref{s3}, the important parameter characterizing the relaxation time of the system is the spectral gap of the Davies generator $\c L$, see~(\ref{gap}), which is equal to the ground state gap of the master Hamiltonian $K$ as both are unitarily equivalent. Consider a non-negative matrix $A$ with a non-trivial kernel. The smallest strictly positive eigenvalue $g$ of $A$ is its gap above the kernel and will be shortly denoted by $\r{Gap}(A)$. We shall use in the sequel a number of simple facts about gaps.

\begin{lemma}
\label{lem:1}
i) Let $A$ be a positive operator with non-trivial kernel, then any real number $g$ such that $A^2 \ge g A$ is a lower bound for $\r{Gap}(A)$. \\
ii) Let $A$ and $B$ be positive operators such that $\r{Ker}(A+B)$ is non-trivial and that $\r{Ker}(A+B) = \r{Ker}(B)$, then $\r{Gap}(A+B) \ge \r{Gap}(B)$. \\
iii) Let $A$ and $B$ be commuting positive operators such that $\r{Ker}(A+B)$ is non-trivial, then $\r{Gap}(A+B) \ge min\bigl( \{\r{Gap}(A), \r{Gap}(B)\} \bigr)$.
\end{lemma}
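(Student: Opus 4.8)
The plan is to handle all three items by reducing to the spectral decomposition of the positive operators involved; the only place where genuine care is needed is item (ii), where the kernel hypothesis must be used in exactly the right way.

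For (i), I would write the spectral decomposition $A = \sum_\lambda \lambda\, P_\lambda$ with $\lambda \ge 0$ and $P_\lambda$ the spectral projections. By the functional calculus the operator inequality $A^2 \ge g A$ is equivalent to the scalar inequalities $\lambda(\lambda - g) \ge 0$ on each eigenspace. On the kernel ($\lambda = 0$) this holds automatically, while on every strictly positive eigenspace it forces $\lambda \ge g$. Since $\r{Gap}(A)$ is by definition the smallest strictly positive $\lambda$, this already gives $\r{Gap}(A) \ge g$, which is the claim.

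For (ii), the first step is the observation that for positive operators $\r{Ker}(A+B) = \r{Ker}(A) \cap \r{Ker}(B)$: indeed $\<\psi, (A+B)\psi\> = 0$ forces both non-negative summands $\<\psi, A\psi\>$ and $\<\psi, B\psi\>$ to vanish, and for a positive operator $\<\psi, A\psi\> = 0$ implies $A\psi = 0$ (write $A = (A^{1/2})^* A^{1/2}$). The hypothesis $\r{Ker}(A+B) = \r{Ker}(B)$ therefore says precisely that $\r{Ker}(B) \subseteq \r{Ker}(A)$. I would then invoke the Rayleigh-quotient characterisation of the gap: for a positive operator $C$ with kernel $\c K$, the restriction of $C$ to the invariant subspace $\c K^\perp$ is strictly positive and $\r{Gap}(C) = \min\{\<\psi, C\psi\> : \psi \in \c K^\perp,\ \norm{\psi} = 1\}$. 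Applying this to $C = A+B$ with $\c K = \r{Ker}(B)$, discarding the non-negative term $\<\psi, A\psi\>$, and noting that $\c K^\perp = \r{Ker}(B)^\perp$ is exactly the subspace over which minimising $\<\psi, B\psi\>$ returns $\r{Gap}(B)$, we conclude $\r{Gap}(A+B) \ge \r{Gap}(B)$. The main obstacle, or rather the subtle point, is that the hypothesis $\r{Ker}(A+B) = \r{Ker}(B)$ cannot be dropped: if $\r{Ker}(B)$ were strictly larger than $\r{Ker}(A+B)$, then $\c K^\perp$ would contain nonzero vectors annihilated by $B$, and the lower bound would collapse to $0$.

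For (iii), since $A$ and $B$ are commuting self-adjoint operators they admit a common eigenbasis; write $A\,|i\> = a_i\,|i\>$ and $B\,|i\> = b_i\,|i\>$ with $a_i, b_i \ge 0$, so that $A+B$ is diagonal with eigenvalues $a_i + b_i$ and $\r{Gap}(A+B)$ is the smallest strictly positive such sum. For any index with $a_i + b_i > 0$ at least one of $a_i, b_i$ is strictly positive, and that positive entry is bounded below by $\r{Gap}(A)$ or $\r{Gap}(B)$ respectively; as the remaining entry is non-negative, $a_i + b_i \ge \min(\{\r{Gap}(A), \r{Gap}(B)\})$. Taking the minimum over all admissible indices yields the stated bound. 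This item is the most elementary once simultaneous diagonalisation is invoked, the only routine check being the case split according to whether $a_i$, $b_i$, or both are strictly positive.
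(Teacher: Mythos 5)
Your proposal is correct. The paper states Lemma~\ref{lem:1} without proof (it is introduced as a collection of ``simple facts about gaps''), so there is no argument of the authors' to compare against; your three arguments --- the spectral decomposition reducing $A^2 \ge gA$ to $\lambda(\lambda-g)\ge 0$ for (i), the identification $\r{Ker}(A+B)=\r{Ker}(A)\cap\r{Ker}(B)$ followed by the Rayleigh-quotient characterisation of the gap over $\r{Ker}(B)^\perp$ for (ii), and simultaneous diagonalisation for (iii) --- are exactly the standard proofs the authors presumably had in mind, and each step checks out. Your remark on why the kernel hypothesis in (ii) cannot be dropped is also accurate and worth keeping.
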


We shall also need the following lemma:

\begin{lemma}
\label{lem:2}
For positive operators $A$ and $B$, let $A$ have gap $g_A$ and $\<\varphi \,,\, B\phi\> \ge g_B$ for all normalized $\varphi\in \r{Ker}(A)$, then 
\begin{equation}
A + B \ge \frac{g_A g_B}{g_A + \norm B}.
\label{lem:2:1}
\end{equation}
\end{lemma}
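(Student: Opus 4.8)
The plan is to reduce the operator inequality to a scalar estimate handled by a Schur complement. Let $P$ be the orthogonal projection onto $\r{Ker}(A)$ and $Q=\idty-P$. Because $A$ has gap $g_A$ we have $A\ge g_A Q$, so it is enough to prove $g_A Q+B\ge c$ with $c:=g_Ag_B/(g_A+\norm B)$; since the hypothesis gives $g_B\le\norm B$, we have $c<g_A$. The assumption on $B$ reads $PBP\ge g_B P$.

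As $g_A-c>0$ and $QBQ\ge0$, the $Q$-block of $g_A Q+B-c$, namely $g_A-c+QBQ$, is bounded below by $g_A-c$ and hence invertible on $\ran(Q)$. By the Schur-complement criterion, $g_A Q+B-c\ge0$ is equivalent to positivity on $\ran(P)$ of
\begin{equation}
S:=(PBP-c)-PBQ\,(g_A-c+QBQ)^{-1}\,QBP .
\end{equation}
I would test this on a unit vector $\phi\in\ran(P)$ and abbreviate $t:=\<\phi,B\phi\>\in[g_B,\norm B]$ and $\chi:=QB\phi$, so that $\<\phi,S\phi\>=t-c-\<\chi,(g_A-c+QBQ)^{-1}\chi\>$.

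The core of the argument is an upper bound on the resolvent term, and here I would use \emph{both} $B\ge0$ and $\norm B\,\idty-B\ge0$. Feeding $\phi$ into the Schur complements of these two block-positive operators gives the moment bounds $\<\chi,(QBQ)^{-1}\chi\>\le t$ and $\<\chi,(\norm B-QBQ)^{-1}\chi\>\le\norm B-t$. Writing $m:=g_A-c$, I would then dominate the resolvent by the combination adapted to exactly these two moments: an elementary check (the relevant quadratic is a nonnegative perfect square, its discriminant collapsing to $m(m+\norm B)$) produces constants $\alpha,\beta\ge0$ with $(m+r)^{-1}\le\alpha\,r^{-1}+\beta\,(\norm B-r)^{-1}$ for all $r\in(0,\norm B)$ and with $\alpha t+\beta(\norm B-t)=t(\norm B-t)/(m+\norm B-t)$. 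Applying this with $r$ running over the spectrum of $QBQ$ yields
\begin{equation}
\<\chi,(g_A-c+QBQ)^{-1}\chi\>\le\frac{t(\norm B-t)}{g_A-c+\norm B-t}.
\end{equation}

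This resolvent estimate is where I expect the genuine difficulty to sit, and it is crucial that both operator bounds on $B$ enter: discarding $QBQ$ in the resolvent, or controlling the cross term by a plain Schwarz or triangle inequality, only delivers the strictly weaker constant $g_Ag_B/(g_A+g_B+\norm B)$. Granting the estimate the conclusion is immediate: with $K:=g_A-c+\norm B$ one has $K-t>0$, so $\<\phi,S\phi\>\ge0$ is equivalent to $(t-c)(K-t)\ge t(\norm B-t)$; expanding and using $K+c-\norm B=g_A$ this becomes $t\,g_A\ge c\,K=g_Ag_B-c^2$, which holds since $t\ge g_B$. Hence $S\ge0$, so $g_A Q+B\ge c$ and therefore $A+B\ge c$, which is~(\ref{lem:2:1}).
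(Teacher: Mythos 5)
Your proof is correct, but it takes a genuinely different route from the paper's. The paper argues per test vector: writing $\psi = a\varphi + b\varphi'$ with $\varphi\in\r{Ker}(A)$ and $\varphi'\perp\r{Ker}(A)$, it bounds $\<\psi,(A+B)\psi\>$ from below by the smallest eigenvalue of the $2\times2$ matrix $\r{diag}(0,g_A)$ plus the compression of $B$ to $\r{span}\{\varphi,\varphi'\}$, and then invokes an elementary $2\times2$ eigenvalue estimate (Lemma~\ref{lem:3}, essentially $\lambda_-\ge\det/\mathrm{trace}$) in which only $y=\<\varphi,B\varphi\>\ge g_B$, $u=g_A$ and $\norm{C'}\le\norm B$ survive. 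You instead keep the full block decomposition, pass to the Schur complement on $\r{Ker}(A)$, and control the resolvent term by a two-sided moment argument using both $B\ge0$ and $B\le\norm B$. I checked the key step: the perfect-square construction does produce admissible weights $\alpha=\frac{(m+\norm B)r_0^2}{\norm B(m+r_0)^2}$ and $\beta=\frac{m(\norm B-r_0)^2}{\norm B(m+r_0)^2}$ with $r_0=\norm B-t$, both nonnegative, and the closing algebra $t\,g_A\ge cK=g_Ag_B-c^2$ is right, so the same constant $g_Ag_B/(g_A+\norm B)$ comes out. The paper's route is considerably shorter and more elementary; yours is heavier but makes explicit why both operator bounds on $B$ are needed for the sharp constant (your remark that a one-sided estimate only yields $g_Ag_B/(g_A+g_B+\norm B)$ is on point), and it shows the bound is saturated exactly on rank-one $B$. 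The only loose end is the interpretation of $(QBQ)^{-1}$ and $(\norm B-QBQ)^{-1}$ when $0$ or $\norm B$ lies in the spectrum of $QBQ$; since $\chi=QB\phi$ lies in $\ran\bigl((QBQ)^{1/2}\bigr)$ and is orthogonal to any $\norm B$-eigenvector of $QBQ$ (such a vector is automatically an eigenvector of $B$), the pseudo-inverse versions of your moment bounds hold and nothing breaks.
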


\begin{proof} 
For any normalized vector $\psi$ we write 
\begin{equation}
\psi = a \varphi + b \varphi'
\label{lem:2:2}
\end{equation} 
where $\varphi \in \r{Ker}(A)$, $\varphi'\in \r{Ker}(A)^\perp$ are normalized and $\abs a^2 + \abs b^2 = 1$. We then have 
\begin{equation}
 \< \psi \,,\, (A + B)\psi \> \ge (\eta \,,\, M\eta) \ge \lambda_-,
\label{lem:2:3}
\end{equation}
where $(\cdot\,,\,\cdot)$ is the scalar product in $\Cx^2$, $\eta = (a,b)$,
\begin{equation}
M = \begin{bmatrix} 0&0 \\0&g_A \end{bmatrix} + \begin{bmatrix} \<\varphi \,,\, B \varphi\> &\<\varphi \,,\, B \varphi'\> \\\<\varphi' \,,\, B \varphi\>& \<\varphi'\,,\, B \varphi'\> \end{bmatrix}.
\label{lem:2:4}
\end{equation}
and $\lambda_-$ is the smaller eigenvalue of the positive matrix $M$. The required estimate $\lambda_-$ is obtained from Lemma~\ref{lem:3} below, noting that $\norm B \ge \norm M$.
\end{proof}

\begin{lemma}
\label{lem:3}
Consider the $2\times2$ matrix $C$ given by 
\begin{equation}
\begin{bmatrix} y&x \\ x^*&z+u \end{bmatrix}
\label{lem:3:1}
\end{equation}
where $u>0$ and $C'$ is positive. Here, 
\begin{equation}
C' = \begin{bmatrix} y&x \\ x^*&z \end{bmatrix}.
\label{lem:3:2}
\end{equation} 
Let $\epsilon_-$ be the smaller eigenvalue of $C$, then
\begin{equation}
\epsilon_- \ge \frac{y u}{u + \norm{C'}}.
\label{lem:3:3}
\end{equation}
\end{lemma}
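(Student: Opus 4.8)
The plan is to read the bound directly off the trace and determinant of $C$, using that $C$ is positive. Indeed, writing
\begin{equation}
C = C' + \begin{bmatrix} 0 & 0 \\ 0 & u \end{bmatrix}
\end{equation}
exhibits $C$ as a sum of two positive matrices, hence positive, so both its eigenvalues $\epsilon_- \le \epsilon_+$ are nonnegative and $\epsilon_+$ coincides with the operator norm $\norm C$. First I would record the product invariant:
\begin{equation}
\epsilon_-\,\epsilon_+ = \det C = y(z+u) - \abs x^2 = (yz - \abs x^2) + yu = \det C' + yu.
\end{equation}
Since $C'$ is positive its determinant is nonnegative, so $\epsilon_-\,\epsilon_+ \ge yu$.

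Next I would bound the larger eigenvalue from above. Because $\epsilon_+ = \norm C$ and the norm is subadditive,
\begin{equation}
\epsilon_+ = \norm C \le \norm{C'} + \norm{\begin{bmatrix} 0 & 0 \\ 0 & u \end{bmatrix}} = \norm{C'} + u,
\end{equation}
the last norm being $u$ as $u>0$. Note also that $\epsilon_+>0$: the $(2,2)$ entry of $C$ equals $z+u \ge u > 0$, so $C \ne 0$ and its largest eigenvalue is strictly positive. Dividing the product relation by $\epsilon_+$ is therefore legitimate and gives
\begin{equation}
\epsilon_- = \frac{\det C}{\epsilon_+} \ge \frac{yu}{\epsilon_+} \ge \frac{yu}{u + \norm{C'}},
\end{equation}
which is exactly the claimed estimate~(\ref{lem:3:3}).

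I do not anticipate a genuine obstacle: the argument is an elementary two-by-two computation. The only point requiring a little care is to verify that $\epsilon_+$ is strictly positive before inverting it, which follows immediately from $u>0$. The one conceptual observation is that the denominator $u + \norm{C'}$ is precisely what the triangle inequality yields for $\norm C = \epsilon_+$, so that combining it with the identity $\epsilon_-\,\epsilon_+ = \det C \ge yu$ closes the proof.
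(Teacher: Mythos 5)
Your proof is correct, and since the paper states Lemma~\ref{lem:3} without supplying any proof there is nothing to diverge from; the identity $\epsilon_-\epsilon_+ = \det C = \det C' + yu \ge yu$ combined with $\epsilon_+ = \norm{C} \le \norm{C'} + u$ is the natural elementary route the authors presumably intended. All the small points check out: $y \ge 0$ and $\det C' \ge 0$ follow from the positivity of $C'$, and $\epsilon_+ \ge z + u > 0$ justifies the division.
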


\section{The Ising ferromagnet on a ring}
\label{s5}

The aim of this section is to analyze the relaxation to equilibrium of a system of $N$ Ising spins on a ring weakly coupled to a thermal environment. The model is too simple to be a reasonable candidate for a memory but it will provide useful techniques to deal with the 2D Kitaev model.

\subsection{The model}
\label{s4:1}

The elementary building blocks of the model are two-level systems with the usual Pauli matrices and the identity as a basis of the observables
\begin{equation} 
\sigma^x = \begin{bmatrix} 0 &1 \\ 1 &0 \end{bmatrix},\enskip \sigma^y = \begin{bmatrix} 0 &-i \\ i &0 \end{bmatrix},\enskip \text{and } \sigma^z = \begin{bmatrix} 1 &0 \\ 0 &-1 \end{bmatrix}.
\end{equation} 
There are $N$ such spins arranged in a ring configuration, a subscript $j = 1,2, \ldots, N$ denotes the position of the spin and $N+1$ is identified with 1.

The interaction for an Ising ferromagnetic ring is given by bond observables
\begin{equation} 
Z_b := \sigma_j^z \sigma_{j+1}^z, \enskip\text{with}\enskip b = \{j,j+1\}. 
\end{equation} 
Clearly these bonds are not independent as they satisfy the relation
\begin{equation}
\prod_b Z_b = \idty.
\label{ring} 
\end{equation}
The ferromagnetic Ising Hamiltonian is then
\begin{equation}
H^{\text{Ising}} := - \sideset{}{'}\sum_b J\, Z_b,\enskip J > 0
\label{ising}
\end{equation}
where the prime on the summation reminds that we must take cyclic boundary conditions~(\ref{ring}) into account. This $N$ spin model has a two-fold degenerate ground state. The $Z_b$, and any product of $Z_b$'s, all have expectation 1 which determines the state completely but for one freedom in $\Cx^2$. Precisely this freedom is used to encode a single logical qubit. As before, the term qubit will be used to distinguish between the physical spins and an encoded abstract qubit which is exhibited by the following construction.

The observables
\begin{equation}
\s X := \sigma_1^x \sigma_2^x \cdots \sigma_N^x \enskip\text{and}\enskip \s Z := \sigma_1^z
\label{isingqubit}
\end{equation}
commute with all the bond observables and are therefore constants of the motion for the Hamiltonian evolution. Moreover, they satisfy the qubit relations
\begin{equation}
\s X = \s X^\dagger,\enskip \s Z = \s Z^\dagger,\enskip \s X^2 = \s Z^2 = \idty\enskip \text{and } \s X\, \s Z + \s Z\, \s X = 0.
\label{qubitrel}
\end{equation}
Actually, the Hamiltonian~(\ref{ising}) could be taken more generically without altering the picture. Instead of~(\ref{ising}) we could consider
\begin{equation}
H^{\text{Ising}} := - \sideset{}{'}\sum_b J_b\, Z_b,\enskip J_b > \r{Cst}.
\end{equation} 
We would still have the same ground state and the constants of the motion would now be the algebra generated by the bond observables. The qubits~(\ref{isingqubit}) would still be preserved under this more generic kind of evolution.   

Let $\c Q$ the be the qubit algebra generated by $\s X$ and $\s Z$. We shall decompose the observables with respect to $\c Q$. An admissible bond on a ring of $N$ sites is a configuration $|\b b\> = |b_1, b_2, \ldots, b_N\>$ with $b_j \in \{+,-\}$ containing an even number of minuses. Next consider a Hilbert space $\c H_+$ spanned by an orthonormal family $\{|\b b\>\}$ with $\b b$ admissible. The algebra of linear transformations of $\c H_+$ will be called the full bond algebra and denoted by $\c A^{\text{full}}_{r b}$.

\begin{lemma}
\label{lem:4} The algebra of observables of the Ising model on a ring can be decomposed into the tensor product
\begin{equation}
Q \otimes \c A^{\r{full}}_{\r b}.
\end{equation}
\end{lemma}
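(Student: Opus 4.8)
The plan is to exhibit the tensor decomposition already at the level of the underlying Hilbert space, by passing from the spin variables to bond (domain-wall) variables, and then to read the factorization of the observable algebra off the action of the generators $\s X,\s Z$ of $\c Q$. I would work in the $\sigma^z$-product basis $\{|s_1,\dots,s_N\>\}$, $s_j\in\{+,-\}$, of $(\Cx^2)^{\otimes N}$, and to each spin configuration associate the pair formed by the first spin $s_1$ and the bond configuration $b_j:=s_js_{j+1}$ (indices modulo $N$). Since $\prod_j b_j=\prod_j s_j^2=1$, every bond configuration arising this way carries an even number of minuses, i.e.\ is admissible in the sense used to define $\c H_+$.

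First I would verify that the assignment $|s_1,\dots,s_N\>\mapsto |s_1\>\otimes|b_1,\dots,b_N\>$ is a bijection between spin configurations and pairs $(s_1,\b b)$ with $\b b$ admissible, and therefore extends to a unitary $U\colon(\Cx^2)^{\otimes N}\to\Cx^2\otimes\c H_+$. Bijectivity is seen from the reconstruction $s_j=s_1\prod_{i<j}b_i$: given $s_1$ and an admissible $\b b$ this returns a spin configuration, the only point needing care being that closing the loop around the ring reproduces $s_1$, which holds precisely when $\prod_j b_j=1$, that is, relation~(\ref{ring}). A dimension count is reassuring, as there are $2^{N-1}$ admissible bond configurations and $2\cdot 2^{N-1}=2^N$.

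Next I would transport the generators through $U$. A short computation should give $U\s Z U^*=\sigma^z\otimes\idty$ and $UZ_bU^*=\idty\otimes(\text{multiplication by }b_j)$, reading off respectively the first spin and a single bond, and, decisively, $U\s X U^*=\sigma^x\otimes\idty$: the global flip $\s X=\sigma_1^x\cdots\sigma_N^x$ sends $s\mapsto-s$, which inverts $s_1$ but leaves each bond $b_j=(-s_j)(-s_{j+1})$ untouched. Thus both generators of $\c Q$, see~(\ref{isingqubit}), act only on the first tensor factor, and by the qubit relations~(\ref{qubitrel}) they generate all of $\c B(\Cx^2)\otimes\idty$. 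Hence $U\c Q U^*=\c B(\Cx^2)\otimes\idty$ and
\[
\c B\bigl((\Cx^2)^{\otimes N}\bigr)\cong\c B(\Cx^2)\otimes\c B(\c H_+)=\c Q\otimes\c A^{\r{full}}_{\r b},
\]
which is the assertion.

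I expect the only delicate step to be the consistency check when closing the ring: it is exactly there that the ferromagnetic constraint~(\ref{ring}) enters, selecting $\c H_+$ rather than the full $2^N$-dimensional bond space and, at the same time, forcing $\s X$ to act as a pure qubit flip with no action on the bonds. As a sanity check one may observe that the commutant $\idty\otimes\c B(\c H_+)$ is already generated within the model by the $Z_b$ together with the flips $\sigma_j^x$ for $j\ge2$, since such a $\sigma_j^x$ fixes $s_1$ and merely flips the two adjacent bonds $b_{j-1},b_j$.
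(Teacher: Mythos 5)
Your proof is correct and takes essentially the same route as the paper's: the paper's one-line argument is exactly the relabelling of the $\sigma^z$ product basis by the eigenvalue of $\s Z$ together with the admissible bond configuration (your unitary $U$), after which $\c A^{\r{full}}_{\r b}$ is identified as generated by the bond observables and the two-bond flips $\sigma^x_j$ away from site $1$. You have simply made explicit the bijectivity check via the reconstruction $s_j=s_1\prod_{i<j}b_i$ and the transport of the generators $\s X$, $\s Z$, $Z_b$.
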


\begin{proof}
A natural orthonormal basis of the Hilbert space of the full system consists of the tensor basis $|\bi \epsilon\>$ where $|\bi \epsilon \> = |\epsilon_1, \epsilon_2, \ldots, \epsilon_N\>$ corresponds to the eigenvalues of all $\sigma^z$'s. An alternative labelling is to specify the eigenvalue of the qubit $\s Z$ and $|\b b\>$ of all the admissible bond observables. $\c A^{\text{full}}_{\r b}$ is then generated by the bond observables and by all operators that flip two intersecting bonds and leave $\c Q$ untouched, i.e.\ by $\sigma^x_j$ with $j \neq 2$.
\end{proof}

\subsection{The gap of the generator}
\label{s5:2}

Let us now turn to the ring in a thermal environment. A generic interaction Hamiltonian is of the form
\begin{equation}
H^{\text{int}} = \sum_{j=1}^N \sigma^x_j \otimes f_j + \sum_{j=1}^N \sigma^y_j \otimes f_{N+j} + \sum_{j=1}^N \sigma^z_j \otimes f_{2N+j}. 
\label{isingint}
\end{equation}
where $f_j$ is a self-adjoint field of the $j$-th bath. To each of the terms in~(\ref{isingint}) there corresponds a number of terms in the Davies generator labelled by the Bohr frequencies of the coupling operators. As explained in Section~\ref{s2} every contribution is a positive operator on Liouville space and the identity is a common element of their kernels.

As $H^{\text{sys}}$, see~(\ref{ising}), is a sum of local terms which commute amongst themselves, the Hamiltonian system dynamics inherits this strong locality property: an observable living at site $j$ will at most spread to the sites $j-1$, $j$, and $j+1$. We explicitly compute the evolution of the $\sigma^x_j$ coupling operators
\begin{equation}
\r e^{itH^{\text{Ising}}}\, \sigma^x_j\, \r e^{-itH^{\text{Ising}}} = \r e^{-4iJt}\, a_j + \r e^{4iJt}\, a_j^\dagger + a_j^0.
\label{evol}
\end{equation}
The Fourier components of $\sigma^x_j$ are given in terms of projection operators
\begin{equation} 
P_j^0 = \frac{1}{2}\, \bigl( \idty - Z_b Z_{b'} \bigr) \enskip\text{and}\enskip P_j^{\pm} = \frac{1}{4}\, \bigl( \idty \mp Z_b \bigr)\, \bigl( \idty \mp Z_{b'} \bigr) 
\end{equation} 
with $j = \{b,b'\}$:
\begin{equation}
\begin{split}
&a_j^0 = P_j^0\, \sigma^x_j\, P_j^0 = P_j^0\, \sigma^x_j = \sigma^x_j\, P_j^0 \\
&a_j = P_j^-\, \sigma^x_j\, P_j^+ = P_j^-\, \sigma^x_j =
\sigma^x_j\, P_j^+.
\end{split}
\label{sising}
\end{equation}
Note that the projectors select neighbouring bond configurations that are equal (up or down) or different and that they satisfy $P^0 + P^+ + P^- = \idty$. The Bohr frequencies of the Ising Hamiltonian are $0,\ \pm4J,\ \pm8J,\ldots$ but only the Bohr frequencies $0$ and $\pm 4J$ contribute to the evolution of the coupling operators $\sigma^x_j$, this is due to the strong locality property of the Ising dynamics. Similar expressions hold for the $\sigma^y_j$ terms, it suffices to replace $x$ by $y$ in~(\ref{evol}--\ref{sising}). As the detailed structure of the bath operators $f_j$ is rather irrelevant and to avoid inessential constants we shall choose the $\hat g_\alpha(\omega)$ in the Davies generator~(\ref{dav}) in a convenient way, always assuming that they are strictly larger than a positive constant independent of the system size and the frequency. 
To be specific, the constants are chosen in such a way that the contributions coming from $\sigma^x_j$ (or $\sigma^y_j$) are of the following form:
\begin{equation}
\begin{split}
\c L_j^x(A) = \frac{1}{2} \Bigl\{
&h_+\,a_j^\dagger\, [A \,,\, a_j] + h_+\,[a_j^\dagger \,,\, A]\, a_j + h_-\, a_j\, [A \,,\, a_j^\dagger] \\
&+ h_-\, [a_j \,,\, A]\, a_j^\dagger \Bigr\} - \frac{1}{2}
h_0\,[a^0_j \,,\, [a^0_j \,,\, A]].
\end{split}
\label{eq:gen-L1x}
\end{equation}
\begin{equation}
h_+ = \frac{2}{\gamma^2 + 1},\quad h_- = \frac{2\gamma^2}{\gamma^2 + 1},\quad h_0 = 1,
\label{eq:const-h}
\end{equation}
so that $h_+ + h_- = 2h_0$. Here 
\begin{equation}
\gamma=e^{-2J\beta}.
\label{eq:gamma}
\end{equation}

\begin{theorem}[Gap Ising ring]
\label{thm:1} 
Let $\c L$ be the Davies generator for a system of $N$ Ising spins on a ring, see~(\ref{ising}), obtained by the weak-coupling procedure from the interaction Hamiltonian $H^{\r{int}}$ as in~(\ref{isingint}), then
\begin{equation} 
\r{Gap}(\c L)  \ge \frac{1}{3}\, \r e^{-8\beta J}. 
\end{equation} 
\end{theorem}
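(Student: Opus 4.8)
The plan is to move the problem onto the master Hamiltonian $K$ of Section~\ref{s4}, where $\r{Gap}(\c L)=\r{Gap}(K)$, and then to estimate the ground-state gap of $K$ by writing it as a fast part with an explicit, $N$-independent gap plus a slow remainder, finishing with Lemma~\ref{lem:2}. I would begin by exploiting the strong locality recorded in~(\ref{evol})--(\ref{sising}): each $\sigma^x_j,\sigma^y_j$ carries only the Bohr frequencies $0$ and $\pm4J$, and each $\sigma^z_j$ only $\omega=0$. Thus $K$ is a sum of local positive terms sharing the common ground vector $\rho_\beta^{1/2}$. The $\omega=0$ terms (for which $\eta=1$) are the domain-wall hopping operators $a^0_j$ together with the $\sigma^z_j$-dephasing, while the $\omega=4J$ terms (for which $\eta=\gamma=\r e^{-2\beta J}$) are the pair creation/annihilation operators $a_j^\dagger,a_j$, carrying the weights $h_-\sim\r e^{-4\beta J}$ and $h_+\sim2$ of~(\ref{eq:const-h}).

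The conceptual point I would isolate first is why the theorem is not automatic. Diagonal observables are relaxed only by the $\sigma^x,\sigma^y$ channels, i.e.\ by a classical domain-wall dynamics: pure hopping of domain walls on the ring is diffusive and would, on its own, give a spectral gap of order $1/N^2\to0$. The mechanism that rescues the $N$-independence is that pair creation and annihilation short-circuit this diffusion, relaxing even the longest-wavelength collective modes at a Boltzmann-suppressed but $N$-independent rate. The whole estimate must be organized so that this creation/annihilation channel is what controls the slowest modes.

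Concretely, I would split $K=A+B$, choosing $A$ (built from the dephasing and hopping/neutral terms, suitably arranged) so that its gap $g_A$ is bounded below by an $N$-independent constant of order $h_0=1$ and $\r{Ker}(A)$ is exactly the subspace of slowly relaxing collective observables, and setting $B$ equal to the residual creation/annihilation terms, so that $\norm B\le h_++h_-=2h_0=2$ by the rate identity in~(\ref{eq:const-h}). Using the decomposition of Lemma~\ref{lem:4} to describe $\r{Ker}(A)$ explicitly, the decisive step is to prove the variational lower bound $\<\varphi\,,\,B\varphi\>\ge g_B:=\r e^{-8\beta J}$ for every normalized mean-zero $\varphi\in\r{Ker}(A)$, expressing that each such observable is still connected to equilibrium through an $a_j^\dagger$ pair-creation event with no barrier growing in $N$. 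Granting this, Lemma~\ref{lem:2} yields
\begin{equation}
\r{Gap}(\c L)=\r{Gap}(K)=\r{Gap}(A+B)\ge\frac{g_A\,g_B}{g_A+\norm B}=\frac{h_0\cdot\r e^{-8\beta J}}{h_0+(h_++h_-)}=\frac13\,\r e^{-8\beta J},
\end{equation}
with the factor $\tfrac13$ coming directly from $h_++h_-=2h_0$.

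The main obstacle is exactly this bottleneck bound, and with it the engineering of the split so that $g_A$ and $\norm B$ are \emph{simultaneously} of order one uniformly in $N$. The real content is to show that the diffusive $1/N^2$ modes are in fact relaxed at the $N$-independent rate supplied by creation/annihilation, which is where the explicit form of $\r{Ker}(A)$ and a careful use of the $a_j^\dagger$ with weight $h_-$ are needed. I expect the exponent $\r e^{-8\beta J}$ to be an artefact of this crude but robust lower bound rather than the sharp rate (which should be closer to $\r e^{-4\beta J}$); securing the $N$-independence, not the optimal power, is the crux.
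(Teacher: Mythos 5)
Your reduction to the master Hamiltonian and your plan to close with Lemma~\ref{lem:2} are in the spirit of the paper, and you have correctly identified both the physical mechanism (pair creation/annihilation short-circuits domain-wall diffusion) and the final constant. But the decomposition $K=A+B$ you propose cannot be made to work, for two concrete reasons. First, if $A$ consists of the $\omega=0$ (hopping plus dephasing) terms, its spectrum above its kernel is that of a symmetric-exclusion-type walk of domain walls on the ring, so $g_A\sim 1/N^2$: the long-wavelength density modes are \emph{not} in $\r{Ker}(A)$, they merely have eigenvalues of order $1/N^2$, and Lemma~\ref{lem:2} sees only the kernel, never the near-kernel. You cannot simultaneously have $g_A\sim h_0$ and keep the slow modes out of $\r{Ker}(A)$; the phrase ``suitably arranged'' is carrying the entire burden of the proof. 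Second, the bound $\norm B\le h_++h_-=2$ is false for $B$ equal to the sum over all $N$ sites of the creation/annihilation terms: each local term has norm at most $h_++h_-$, but the sum has norm growing linearly in $N$ (its diagonal part alone contributes of order $N$ on suitable bond configurations). With the correct $g_A\sim 1/N^2$ or $\norm B\sim N$ the quotient $g_Ag_B/(g_A+\norm B)$ vanishes with $N$, which is precisely what the theorem must avoid. Finally, the ``decisive step'' $\<\varphi\,,\,B\varphi\>\ge \r e^{-8\beta J}$ on $\r{Ker}(A)$ is asserted rather than proved, and it is essentially the content of the theorem.

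The paper arranges the estimate so that both inputs to Lemma~\ref{lem:2} are genuinely $N$-independent. It decomposes the observables as $\c Q\otimes\c A^{\r{full}}_{\r b}$ (Lemma~\ref{lem:4}), drops all couplings except $\sigma^x_1,\dots,\sigma^x_N,\sigma^y_1$, and takes for the role of your $A$ the bond dynamics $-\tilde{\c L}$ generated by sites $2,\dots,N$, which contains hopping \emph{and} creation/annihilation together. Its $N$-independent gap $h_-/2$ is proved not by separating the two channels but by Lemma~\ref{lem:1}~i), i.e.\ by establishing $\tilde K_{\r{ab}}^2\ge g\,\tilde K_{\r{ab}}$ using that the local terms $k_{bb'}$ are projectors, that non-adjacent ones commute, and an explicit diagonalization of $k_{12}+k_{23}$ (Proposition~\ref{pro:2}); this is where the diffusive $1/N^2$ obstruction is actually overcome. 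Lemma~\ref{lem:2} is then invoked only with $B=-\c L^x_1$, a \emph{single-site} term with $\norm{\c L^x_1}\le 2$, on the blocks $\s Z\otimes\c A^{\r{full}}_{\r b}$, $\s X\otimes\c A^{\r{full}}_{\r b}$, $\s Y\otimes\c A^{\r{full}}_{\r b}$, where $\r{Ker}(-\tilde{\c L})$ is one-dimensional and $-\<\s Z\otimes\idty\,,\,\c L^x_1\,\s Z\otimes\idty\>\ge 2h_-$ is an elementary computation; the product $(h_-/2)(2h_-)=h_-^2\ge\gamma^4$ is what produces $\r e^{-8\beta J}$. To repair your argument you would have to prove the $K^2\ge gK$ bound (or an equivalent $N$-uniform gap for the combined hopping-plus-pair bond dynamics) before Lemma~\ref{lem:2} can be brought to bear.
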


\begin{proof}
In Section~\ref{s4} we showed that the gap of the Davies generator coincides with the gap of the master Hamiltonian which consists of a sum of positive contributions with a unique common zero-energy eigenvector $\idty$ in Liouville space. By Lemma~\ref{lem:1}~ii) we will only lower the gap by dropping terms in the master Hamiltonian~(\ref{ham:master}) provided we don't introduce additional zero-energy vectors in doing so. We shall use this property to obtain a simpler expression for a lower bound on the gap. In fact we shall only retain terms coming from $\sigma^x_j$, $j=1,2,\ldots, N$ and $\sigma^y_1$. This leads to the following structure for the simplified generator
\begin{equation}
\c L_1^x + \c L_1^y + \tilde{\c L}. 
\label{simgen}
\end{equation}
Here $\c L_1^x$ refers to the contribution coming from $\sigma^x_1$, $\c L_1^y$ to that from $\sigma_1^y$ and $\tilde{\c L}$ from the contributions of the $\sigma_j^x$ with $j=2,3, \ldots,N$.

Clearly, by~(\ref{isingqubit}) all $\sigma^x_j$ with $j=2,3,\ldots, N$ belong to $\c A^{\text{full}}_{\r b}$ and so do the projectors $P^0_j$ and $P^\pm_j$ for all $j$. Therefore the Davies operators $a_j$ and $a^0_j$ with $j\neq1$ belong to the algebra $\c A^{\text{full}}_{\r b}$. This implies that $\tilde{\c L}$ leaves $\c Q$ untouched and that $\c A^{\text{full}}_{\r b}$ is globally invariant. By abuse of notation we shall from now on use the notation $\id \otimes \tilde{\c L}$ instead of $\tilde{\c L}$.

Next we consider the orthogonal decomposition of $\c Q$ into a direct sum of four 1D spaces, generated by $\idty$, $\s X$, $\s Y$ and $\s Z$ respectively. Here $\s Y = i \s Z \s X$. Note that this is an orthogonal decomposition not only for the Liouville scalar product but for the Hilbert-Schmidt scalar product as well. This decomposition induces one of the full algebra of observables as
\begin{equation}
\Bigl( \idty \otimes \c A^{\text{full}}_{\r b} \Bigr) \oplus \Bigl( \s X \otimes \c A^{\text{full}}_{\r b} \Bigr) \oplus \Bigl( \s Y \otimes \c A^{\text{full}}_{\r b} \Bigr) \oplus \Bigl( \s Z \otimes \c A^{\text{full}}_{\r b} \Bigr). 
\label{ortdec}
\end{equation}
Clearly $\id \otimes \tilde{\c L}$ leaves each of these blocks invariant and has a same action on the $\c A^{\text{full}}_{\r b}$ factors. Moreover, both $\c L_1^x$ and $\c L^y_1$ are block-diagonal in the decomposition~(\ref{ortdec}). This can be seen by considering their explicit actions on elements of the type $A \otimes B$ with $A \in \{\idty, \s X, \s Y, \s Z\}$ and $B \in \c A^{\text{full}}_{\r b}$.

As a first simplification, we observe that each of the three terms in~(\ref{simgen}) is negative on $\s X \otimes \c A^{\text{full}}_{\r b}$, $\s Y \otimes \c A^{\text{full}}_{\r b}$ and $\s Z \otimes \c A^{\text{full}}_{\r b}$. We further simplify our description by retaining only $\c L^y_1$ on $\s X \otimes \c A^{\text{full}}_{\r b}$, $\c L^x_1$ on $\s Y \otimes \c A^{\text{full}}_{\r b}$ and either one of the two on $\s Z \otimes \c A^{\text{full}}_{\r b}$. 

We shall now bound from below $-\tilde{\c L}-\c L_1^x$ restricted to $\s Z \otimes \c A^{\text{full}}_{\r b}$, the two other parts can be handled in the same way. By Proposition~\ref{pro:1}, which will be proved in Section~\ref{s5:3}, $\tilde{\c L}$ is ergodic and gapped with the estimate $h_-/2$ given by ~(\ref{pro:1:1}). Hence $\tilde{\c L}$ restricted to $\s Z \otimes \c A^{\text{full}}_{\r b}$ has gap bounded by $h_-/2$ above its kernel $\s Z\otimes \idty$. Lemma~\ref{lem:2} then says that on this subspace
\begin{equation}
-\tilde{\c L} -\c L_1^x \ge \frac{- \frac{h_-}{2}\, \< \s Z \otimes \idty \,,\, \c L_1^x \s Z \otimes \idty \>}{\frac{h_-}{2} + \norm{\c L_1^x}}.
\end{equation}
Noting that  
\begin{equation}
G := h_+ P_1^+ + h_- P_1^- + h_0P_1^0 \ge h_-
\label{eq:G}
\end{equation}
with $h_\pm,h_0$ given by (\ref{eq:const-h}), 
we obtain 
\begin{equation}
-\<\s Z \otimes \idty \,,\, \c L_1^x\,\s Z \otimes \idty\> = 2\<\idty \,,\, G\> \ge 2 h_-\quad \text{and}\quad \norm{\c L_1^x} \le 2.
\end{equation}
This gives us the following estimate 
\begin{equation}
\bigl( -\tilde{\c L} -\c L_1^x \bigr) \Bigl|_{\s Z \otimes \c A^{\text{full}}_{\r b}} \ge \frac{h_-^2}{\frac{h_-}{2} + 2}.
\end{equation}

For the first term $\idty \otimes \c A^{\text{full}}_{\r b}$ in~(\ref{ortdec}) we retain only $\id \otimes \tilde{\c L}$. As said, $\tilde{\c L}$ is ergodic and gapped with the bound $h_-/2$. We therefore conclude that
\begin{equation} 
\r{Gap}(\c L) \ge  \min\Bigl( \Bigl\{ \frac{h_-}{2}, \frac{h_-^2}{\frac{h_-}{2} + 2} \Bigr\} \Bigr) = \frac{ h_-^2}{\frac{h_-}{2} + 2} \ge \frac{h_-^2}{3} \ge \frac{\gamma^4}{3}
\end{equation} 
since $1 \ge h_- \ge \gamma^2$.
\end{proof}

\subsection{A bond model}
\label{s5:3}

In this section we consider the spectral gap of the generator $\tilde{\c L}$ introduced in the proof of Theorem~\ref{thm:1}. As before, let $\c H_+$ be the Hilbert space spanned by the basis vectors $|\b b\>$ labelled by bonds on a ring, i.e.\ $|\b b\> = |b_1, b_2, \ldots, b_N\>$ with $b_j \in \{+,-\}$ and $|\b b\>$ contains an even number of minuses. Consider the following operators on $\c H^+$
\begin{equation}
\begin{split}
&a_j = (|--\>\<++|)_{bb'},\enskip a^\dagger_j = (|++\>\<--|)_{bb'},\enskip\text{and} \\
&a^0_j = (|-+\>\<+-|)_{bb'} + (|+-\>\<-+|)_{bb'}
\end{split}
\label{part}
\end{equation}
where the bonds $b$ and $b'$ satisfy $j = \{b,b'\}$. We are interested in the spectral gap of the operator $\tilde{\c L}$ acting on the linear transformations $\c B(\c H^+)$ of $\c H^+$
\begin{equation}
\begin{split}
\tilde{\c L}(X) = \frac{1}{2} \sum_{j=2}^N \Bigl\{
&h_+\, a_j^\dagger\, [X \,,\, a_j] + h_+\, [a_j^\dagger \,,\, X]\, a_j + h_-\,  a_j\, [X \,,\, a_j^\dagger] \\
&+ h_-\, [a_j \,,\, X]\, a_j^\dagger - h_0\,[a^0_j \,,\, [a^0_j \,,\, X]] \Bigr\}.
\end{split}
\label{eq:gen-L}
\end{equation}

\begin{proposition}
\label{pro:1} 
The generator $\tilde{\c L}$ is ergodic and gapped with bound
\begin{equation} 
\r{Gap}(\tilde{\c L}) \ge \frac{h_-}{2} = \frac{\gamma^2}{1 + \gamma^2}
\label{pro:1:1}
\end{equation} 
and $\gamma$ defined in~(\ref{eq:gamma}).
\end{proposition}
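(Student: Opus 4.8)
The plan is to move to the master-Hamiltonian picture of Section~\ref{s4}. There $-\tilde{\c L}$ becomes a positive operator $K=\sum_{j=2}^N k_j$ on Hilbert--Schmidt space, each $k_j$ as in~(\ref{int}), with common zero-energy vector $\rho_\beta^{1/2}$; since $\r{Gap}(\tilde{\c L})=\r{Gap}(K)$ it suffices to bound the latter. The conceptual key is that $a_j,a_j^\dagger,a_j^0$ are the Bohr-frequency components of the single spin flip $\sigma_j^x$ written in the bond basis, so $\tilde{\c L}$ is nothing but the Davies (Glauber) generator for single-site flips at the interior sites $j=2,\dots,N$ of the $1$D Ising chain with the spin at site~$1$ frozen. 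The target value $h_-/2$ is exactly the spectral gap of Glauber's exactly solvable chain, and the relation $h_++h_-=2h_0$ recorded after~(\ref{eq:gamma}) is precisely the condition that makes that chain solvable.

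I would first dispose of ergodicity. Pair creation/annihilation $++\leftrightarrow--$ together with hopping $+-\leftrightarrow-+$, available at every interior site, connects any admissible (even number of minuses) bond configuration to the vacuum, so the transition graph is connected; hence $\{a_j,a_j^\dagger,a_j^0\}'=\Cx\idty$ and $\r{Ker}(\tilde{\c L})=\Cx\idty$, which is the ergodicity statement and makes $\r{Gap}$ well defined. Next I would reduce to the classical sector. Because $H^{\r{Ising}}$ is a sum of commuting bond terms, $K$ is block diagonal for the decomposition of $\c B(\c H_+)$ into coherence sectors $|\b b\>\<\b b'|$; the diagonal sector $\b b=\b b'$ carries exactly the classical Glauber generator on bond configurations with reversible measure $\pi(\b b)\propto\gamma^{\,k(\b b)}$, $k(\b b)$ the number of minuses. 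Inspecting the $k_{\alpha\omega}$ of~(\ref{int}) shows that every off-diagonal sector receives, on top of the diagonal dynamics, strictly positive dephasing contributions, so its gap is at least that of the diagonal sector; it is therefore enough to bound the classical gap from below by $h_-/2$ (the kernel $\rho_\beta^{1/2}$ is itself diagonal).

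For the classical gap I would use the special form of the rates. With $h_++h_-=2h_0$ the equations for the domain-wall (bond) correlation functions close, as in Glauber's solution; passing to fermionic domain-wall variables and diagonalizing, the symmetrized generator maps to a quadratic form whose single-mode relaxation rates have infimum over wavelength equal to $h_-/2$, attained at the long-wavelength mode. Freezing the spin at site~$1$ imposes a Dirichlet boundary on these modes, which can only exclude the slowest mode and hence only raise the smallest rate; thus $\r{Gap}(\tilde{\c L})\ge h_-/2$ on the present ring-with-one-frozen-site geometry.

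The crux is securing a bound \emph{independent of $N$}. Naive local-to-global passages fail: splitting $\tilde{\c L}$ into even-site and odd-site parts makes each part a sum of commuting single-site generators, gapped by Lemma~\ref{lem:1}~iii), but the two parts neither commute nor are ergodic and cannot be recombined, while iterating Lemma~\ref{lem:2} one site at a time compounds a multiplicative loss and produces a bound that closes as $N\to\infty$. What makes the uniform bound possible is structural and is the whole point of the model: the pair terms $a_j,a_j^\dagger$ do not conserve the domain-wall number, so they destroy the diffusive $\c O(1/N^2)$ slow mode of a pure hopping (exclusion) dynamics, while the $1$D Ising measure has finite correlation length at every temperature. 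Turning this mechanism into the clean estimate $h_-/2$---via the exact quadratic diagonalization on the constrained, fixed-boundary configuration space, or via a variance-decomposition estimate exploiting the geometric decay of correlations---is where essentially all the work lies; identifying the minimal rate and checking the coherence sectors are then routine.
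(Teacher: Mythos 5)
Your overall skeleton matches the paper's: ergodicity, block-diagonalization of $\tilde{\c L}$ over coherence sectors, a crude positive lower bound on the off-diagonal blocks, and a gap estimate on the diagonal (Abelian) sector. But the one step that carries all the content --- the $N$-independent bound $\r{Gap}(\tilde{\c L}_{\r{ab}})\ge h_-/2$ --- is not actually proved. You defer it to ``the exact quadratic diagonalization \dots\ where essentially all the work lies'', and the specific claim you would need, that freezing site $1$ ``can only raise the smallest rate'', is not safe: the symmetrized generator $\tilde K_{\r{ab}}=\sum_{bb'} k_{bb'}$ is indeed quadratic in Jordan--Wigner fermions (the diagonal of $k_{bb'}$ is affine in the occupation numbers), but it is a Bogoliubov--de~Gennes Hamiltonian with pairing terms, and for such Hamiltonians open boundary conditions can create sub-gap edge modes rather than merely removing the slowest bulk mode; ruling that out requires solving the boundary problem, which you have not done. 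You also gloss two smaller points: connectivity of the classical transition graph gives triviality of the kernel on the diagonal sector but does not by itself yield $\{a_j,a_j^\dagger,a_j^0\}'=\Cx\,\idty$ (the paper computes this commutant directly, and only for odd $N$), and the pure flip sector $\Gamma_{\r{ab}}=\emptyset$ is not of the form ``diagonal dynamics plus positive dephasing'' and needs a separate, if easy, argument.

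Ironically, the ``naive local-to-global passage'' you dismiss is essentially what the paper does, and it works: with the normalization~(\ref{eq:const-h}) each $k_{bb'}$ is a projector, non-overlapping $k$'s commute (so their products are positive), and Lemma~\ref{lem:1}~i) reduces everything to the two-site inequality $k_{12}+k_{23}+2k_{12}k_{23}+2k_{23}k_{12}\ge g\,(k_{12}+k_{23})$, i.e.\ to computing the eigenvalues of $k_{12}+k_{23}$, which are $\bigl\{0,\,2,\,\tfrac{3+\gamma^2}{2(1+\gamma^2)},\,\tfrac{1+3\gamma^2}{2(1+\gamma^2)}\bigr\}$; summing the resulting local inequalities gives $\tilde K_{\r{ab}}^2\ge \tfrac{g}{2}\tilde K_{\r{ab}}$ with $\tfrac{g}{2}=\gamma^2/(1+\gamma^2)=h_-/2$, uniformly in $N$. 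Off-diagonal blocks containing an interface site are bounded below by $h_-$ outright via~(\ref{eq:Gamma-int}). If you want to salvage your free-fermion route you must actually diagonalize the open-chain Bogoliubov problem in the admissible parity sector and verify the absence of low-lying boundary modes; as written, the proof is incomplete at its crux.
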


In order to estimate the gap of $\tilde{\c L}$ we shall use details on its structure as an operator on Liouville or Hilbert-Schmidt space. For any $\Lambda \subset \{1,2,\ldots,N\}$ with $\bigl( N - \#(\Lambda) \bigr)$ even we consider the subspace $\c B(\c H_+)(\Lambda)$ of $\c B(\c H_+)$ spanned by the rank one operators $|\b b\>\<\b{b'}|$ where $\b b$ and $\b{b'}$ coincide on $\Lambda$ and where all the bonds of $\b{b'}$ sitting on the complement of $\Lambda$ are flipped with respect to these of $\b b$. These subspaces are orthogonal, both with respect to the Liouville and the Hilbert-Schmidt scalar products, so that we can write
\begin{equation}
\c B(\c H_+) = \loplus_\Lambda \c B(\c H_+)(\Lambda). 
\label{blocks}
\end{equation}
It is furthermore not hard to see that
\begin{equation} 
\c B(\c H_+)(\Lambda) = \c A^{\text{ab}}(\Lambda) \otimes \c F(\Lambda). 
\end{equation} 
Here $\c A^{\text{ab}}(\Lambda)$ is the Abelian algebra generated by the projectors on bonds restricted to $\Lambda$ and $\c F(\Lambda)$ is the space spanned by proper flips of bonds restricted to the complement of $\Lambda$. 
It is straightforward to check that every block $\c B(\c H_+)(\Lambda)$ is left invariant by $\tilde{\c L}$. Due to the positivity of $\tilde{\c L}$ we obtain the following structure for the generator.

\begin{lemma}
\label{lem:5} 
The operator $\tilde{\c L}$, see~(\ref{eq:gen-L}) restricted to the algebra  $\c B(\c H_+)$ is block diagonal with respect to the decomposition~(\ref{blocks}).
\end{lemma}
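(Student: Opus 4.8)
The plan is to reduce the block-diagonality claim to the invariance statement announced just before the lemma, namely $\tilde{\c L}\,\c B(\c H_+)(\Lambda)\subseteq\c B(\c H_+)(\Lambda)$ for every admissible $\Lambda$. Since the subspaces $\c B(\c H_+)(\Lambda)$ are mutually orthogonal (in both the Liouville and the Hilbert--Schmidt scalar products) and their direct sum (\ref{blocks}) is all of $\c B(\c H_+)$, invariance of each summand \emph{is} block-diagonality: writing $P_\Lambda$ for the orthogonal projection onto $\c B(\c H_+)(\Lambda)$, invariance gives $P_{\Lambda'}\,\tilde{\c L}\,P_\Lambda=0$ whenever $\Lambda\neq\Lambda'$. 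The detailed-balance self-adjointness (\ref{db}) then guarantees in addition that each diagonal block $\tilde{\c L}|_{\c B(\c H_+)(\Lambda)}$ is again self-adjoint and positive, which is the form needed for the later gap estimate of Proposition~\ref{pro:1}.

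It is therefore convenient to describe the blocks by a single invariant. To each rank-one operator $|\b b\>\<\b{b'}|$ I associate its flip set
\begin{equation}
F(\b b,\b{b'}):=\bigl\{\,i\in\{1,\ldots,N\}\,:\,b_i\neq b'_i\,\bigr\},
\end{equation}
the set of bonds on which the bra and ket configurations disagree. By the definition of $\c B(\c H_+)(\Lambda)$ as the span of those $|\b b\>\<\b{b'}|$ that coincide on $\Lambda$ and are flipped on its complement, $\c B(\c H_+)(\Lambda)$ is exactly the span of the basis operators with $F(\b b,\b{b'})=\Lambda^{\r c}$. Thus it suffices to show that every term of (\ref{eq:gen-L}) sends a basis operator $|\b b\>\<\b{b'}|$ either to $0$ or to a basis operator with the same flip set.

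The key observation is that each of the local operators $a_j$, $a_j^\dagger$ and $a_j^0$, whenever it acts nontrivially, flips \emph{both} bond variables at the pair $b,b'$ meeting at site $j$: the moves $++\leftrightarrow--$ (for $a_j,a_j^\dagger$) and $+-\leftrightarrow-+$ (for $a_j^0$) all change the value of both bonds. Expanding the single and double commutators in (\ref{eq:gen-L}), every resulting monomial falls into one of two classes. The sandwich terms $a_j^\dagger X a_j$, $a_j X a_j^\dagger$ and $a_j^0 X a_j^0$ flip both bonds of $\b b$ on the left and, using $\<\b{b'}|\,a_j=(a_j^\dagger|\b{b'}\>)^\dagger$, both bonds of $\b{b'}$ on the right; since flipping $b_i$ and $b'_i$ simultaneously leaves the truth value of $b_i\neq b'_i$ unchanged, the flip set $F$ is preserved. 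The remaining projector terms $a_j^\dagger a_j\,X$, $a_j a_j^\dagger\,X$, $(a_j^0)^2 X$ and their right-hand analogues involve only the diagonal projectors $|{++}\>\<{++}|$, $|{--}\>\<{--}|$ and $|{+-}\>\<{+-}|+|{-+}\>\<{-+}|$, so they return $X$ or $0$ and trivially preserve $F$. Hence every monomial, every term, and so $\tilde{\c L}$ itself, preserves the flip set, which is the required invariance.

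The only genuinely error-prone step is the bookkeeping. One must expand each commutator correctly, remember that right multiplication $\<\b{b'}|\,a_j$ flips the bra configuration, and verify that no ``mixed'' monomial flipping only one of the two bonds at a site ever survives. This is routine but must be carried out term by term; once it is confirmed that every surviving monomial lies in the sandwich or the projector class above, the conclusion that $\tilde{\c L}$ is block diagonal with positive diagonal blocks is immediate.
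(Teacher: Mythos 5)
Your proof is correct and follows the same route as the paper, which simply asserts that the invariance of each block $\c B(\c H_+)(\Lambda)$ under $\tilde{\c L}$ is ``straightforward to check'' and combines it with positivity to get block diagonality. Your flip-set invariant $F(\b b,\b{b'})$, together with the term-by-term verification that every monomial in the expanded commutators either acts by a diagonal projector or flips both bonds at a site simultaneously on the bra and on the ket, supplies exactly the bookkeeping the paper leaves implicit.
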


\begin{proof}[Proof of Proposition~\ref{pro:1}]
We rely on the block diagonal structure of $\tilde{\c L}$ as described in Lemma~\ref{lem:5}. The proof consists of the following steps \\
1) obtaining a lower bound for minus the restriction of $\tilde{\c L}$ to blocks with non-trivial flip part, which follows immediately this enumeration \\
2) obtaining a lower bound for the gap of the generator $\tilde{\c L}_{\text{ab}}$ which is the restriction of $\c L$ to the subspace with trivial flip part, see Proposition~\ref{pro:2} \\
3) proving the ergodicity of $\tilde{\c L}$, see Lemma~\ref{lem:6}.

A partition of $\{1,2,\ldots, N\}$ into $\Lambda$ and its complement induces a partition of the spins $j = \{b,b'\}$ into three sets: $\Gamma_{\text{flip}}$, $\Gamma_{\text{ab}}$ and $\Gamma_{\text{int}}$, consisting of pairs of neighbouring bonds from $\Lambda^c$, pairs of bonds from $\Lambda$ and pairs that have one bond in $\Lambda$ and the other in its complement. This partition induces accordingly a decomposition of the generator $\tilde{\c L}$. By explicit use of~(\ref{part}) one verifies that for $j \in \Gamma_{\text{ab}}$ the flip part is left untouched while the Abelian part is mapped into itself. The same holds true for $j \in \Gamma_{\text{flip}}$. It remains to consider contributions coming from $j \in \Gamma_{\text{int}}$. We just give an explicit example. Suppose that $j = \{b,b'\}$ with $b\in\Lambda$ and $b'\in\Lambda^c$ such as $(|++\>\<+-|)_{bb'}$. Using~(\ref{part}) we find
\begin{equation} 
\tilde{\c L}_j(|++\>\<+-|) = - \frac12(h_- + h_0) |++\>\<+-|.
\end{equation} 
with $h_-,h_0$ given by (\ref{eq:const-h}).
It is then straightforward to verify that
\begin{equation} 
-\tilde{\c L} \ge \#(\Gamma_{\text{int}}) h_-\geq h_-.
\label{eq:Gamma-int}
\end{equation} 
We should therefore distinguish three cases: the Abelian case $\Gamma_{\text{flip}} = \emptyset$, see Proposition~\ref{pro:2}, the pure flip case $\Gamma_{\text{ab}} = \emptyset$, see the remarks following the proof of Proposition~\ref{pro:2}, and the case $\Gamma_{\text{int}} \neq \emptyset$ which was dealt with above.
\end{proof}

Let $\tilde{\c L}_{\text{ab}}$ be the restriction of the generator $\tilde{\c L}$ to the diagonal algebra generated by projectors of the form $|\b b\>\<\b b|$ with $\b b$ an admissible bond on the ring. As the Gibbs density matrix $\rho_\beta$ is also diagonal these projectors are an orthogonal family. Minus the restricted generator $\tilde{\c L}_{\text{ab}}$ can now easily be expressed with respect to the Hilbert-Schmidt inner product
\begin{equation} 
\tilde K_{\text{ab}} = \sum_{bb'} k_{bb'} 
\end{equation} 
where
\begin{equation} 
k_{bb'} = \begin{pmatrix} \frac{\gamma^2}{1+\gamma^2}&0&0&-\frac{\gamma}{1+\gamma^2}\\0&\frac12&-\frac12&0\\ 0&-\frac12&\frac12&0 \\-\frac{\gamma}{1+\gamma^2}&0&0&\frac{1}{1+\gamma^2} \end{pmatrix}. 
\end{equation} 
Due to our choice of constants $g_\alpha(\omega)$ (as done in \ref{eq:gen-L1x}) are projectors. The summation extends over an open chain of bonds, indeed the contribution of the site $j=1$ of the ring was removed.

\begin{proposition}[Gap on Abelian factor]
\label{pro:2} 
With the notation of above
\begin{equation}
\r{Gap}\bigl( \tilde{\c L}_{\r{ab}}\bigr) = \r{Gap}\bigl( \tilde K_{\r{ab}} \bigr) \ge \frac{\gamma^2}{1+\gamma^2}. 
\label{pro:2:1}
\end{equation}
\end{proposition}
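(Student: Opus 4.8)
The plan is to identify $\tilde K_{\r{ab}}$ with the symmetrised generator of an exactly solvable classical chain and then to reduce the many-body gap to a single-mode estimate. First I record the structure. By detailed balance $\tilde{\c L}_{\r{ab}}$ is self-adjoint for $\<\cdot\,,\,\cdot\>_\beta$ on the diagonal algebra, so it is the generator of a reversible Markov process: reading the eigenvalue $Z_b=+$ as an intact bond and $Z_b=-$ as a domain wall, the operators $a_j^0,a_j,a_j^\dagger$ implement hopping, pair creation and pair annihilation of walls along an \emph{open} chain of $N$ bonds (the contribution of the site $j=1$ having been removed). In the diagonal sector the operators $|\b b\>\<\b b|$ form an orthogonal basis, which I identify with an $\ell^2$-basis $\{|\b b\>\}$ indexed by admissible wall configurations; under the map~(\ref{uni}) the identity goes to the common ground vector $\Omega\propto\sum_{\b b}\gamma^{\#(-)/2}|\b b\>$, whose uniqueness is the ergodicity of $\tilde{\c L}$ established in Lemma~\ref{lem:6}. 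Since $\tilde K_{\r{ab}}=\sum_{bb'}k_{bb'}$ is a frustration-free sum of the rank-two projectors written above, all annihilating $\Omega$, the quantity to estimate is the smallest nonzero eigenvalue of this sum, uniformly in $N$.

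The key point is that this is nothing but Glauber dynamics for the one-dimensional Ising chain with one pinned spin, which is free-fermionic. A Jordan--Wigner transformation turns wall hopping into fermion hopping and pair creation/annihilation into BCS-type pairing, so that $\tilde K_{\r{ab}}$ becomes quadratic and, after a Bogoliubov rotation, takes the form $\sum_q\epsilon_q\,\eta_q^\dagger\eta_q$ with $\epsilon_q\ge0$ and with $\Omega$ the quasiparticle vacuum. Its spectrum is then the set of sums $\{\sum_{q\in S}\epsilon_q\}$, so that
\begin{equation}
\r{Gap}(\tilde K_{\r{ab}})=\min_q\epsilon_q .
\end{equation}
The single-quasiparticle problem is a tridiagonal Bogoliubov--de~Gennes matrix; on the translation-invariant bulk its dispersion is $\tfrac12(1-\gamma'\cos\theta)$ with $\gamma'=\tanh(2\beta J)=(1-\gamma^2)/(1+\gamma^2)$, and the open boundary merely replaces the Fourier phases $\theta$ by the eigenphases of a path-graph Jacobi matrix. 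The elementary estimate $1-\gamma'\cos\theta\ge 1-\gamma'$ then gives
\begin{equation}
\min_q\epsilon_q\ \ge\ \tfrac12(1-\gamma')=\frac{\gamma^2}{1+\gamma^2}=\frac{h_-}{2},
\end{equation}
which is the asserted bound.

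The delicate step is the identity $\r{Gap}(\tilde K_{\r{ab}})=\min_q\epsilon_q$: I must verify that the dissipative generator is genuinely quadratic after Jordan--Wigner, carefully tracking the parity and boundary terms forced by the even-wall constraint defining $\c H_+$ and by the pinned site $j=1$, and that every $\epsilon_q$ is nonnegative, so that no admissible many-particle combination can fall below the lowest single-mode energy. Should one wish to avoid diagonalisation altogether, the same constant can be reached by a direct, manifestly $N$-independent argument: either Nachtergaele's martingale method---using that each $k_{bb'}$ is a projector (local gap $1$) and that the overlap norms $\norm{(G_{m-1}-G_m)\,k_{bb'}}$ of the successive ground-space projectors $G_m$ of the truncated chains are controlled by $\gamma$---or a Dobrushin/path-coupling estimate for the Glauber chain, whose one-step contraction rate in Hamming distance is precisely $1-\gamma'$. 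Either route yields $\gamma^2/(1+\gamma^2)$, a value that is in fact sharp, being attained by the slowest single-spin relaxation mode.
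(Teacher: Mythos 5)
Your route is genuinely different from the paper's. The paper stays entirely elementary: it invokes Lemma~\ref{lem:1}~i) and establishes $\bigl(\tilde K_{\r{ab}}\bigr)^2 \ge \tfrac{g}{2}\,\tilde K_{\r{ab}}$ by using that each $k_{bb'}$ is a projector, that projectors on disjoint pairs of bonds commute (so their products are positive and may be dropped), and that the only nontrivial input is the spectrum of a single overlapping pair $k_{12}+k_{23}$, whose nonzero eigenvalues are $2$, $\frac{3+\gamma^2}{2(1+\gamma^2)}$ and $\frac{1+3\gamma^2}{2(1+\gamma^2)}$; the smallest of these gives $g=\frac{2\gamma^2}{1+\gamma^2}=h_-$ and hence the bound $g/2$. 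Everything reduces to a computation on three bonds, manifestly uniform in $N$. Your identification of $\tilde K_{\r{ab}}$ with the symmetrised Glauber generator of a pinned Ising chain is correct, and the free-fermion structure is real (the diagonal part of $k_{bb'}$ contains no $\sigma^z_b\sigma^z_{b'}$ term, so Jordan--Wigner does produce a quadratic form); carried to completion this would yield the exact spectrum rather than a one-sided estimate.

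However, the step your argument leans on is not established, and you flag it yourself. First, the bound $\min_q\epsilon_q\ge\tfrac12(1-\gamma')$ is read off from the \emph{bulk} dispersion, but a Bogoliubov--de~Gennes problem with open (pinned) boundaries can have boundary-localised modes strictly below the bulk band --- the transverse-field XY/Kitaev chain is the standard example, with near-zero edge modes in part of its phase diagram. Nothing you write excludes such a sub-band mode for this particular $N$-bond open chain, and if one existed the gap would be smaller than claimed; one must actually diagonalise (or bound the singular values of) the tridiagonal BdG matrix rather than appeal to ``eigenphases of a path-graph Jacobi matrix''. Second, the even-wall constraint defining $\c H_+$ restricts the spectrum to even quasiparticle numbers, so $\r{Gap}(\tilde K_{\r{ab}})=\min_q\epsilon_q$ needs a separate argument; the inequality $\ge$ that you actually need survives provided every $\epsilon_q$ is strictly positive, but that positivity is again part of what must be proved, since Lemma~\ref{lem:6} only gives a one-dimensional kernel in the even sector. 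The alternative routes you mention (martingale method, path coupling) would plausibly work but are not carried out either. The paper's squaring argument sidesteps all of this bookkeeping at the price of giving only a lower bound --- which is all Proposition~\ref{pro:2} asserts.
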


\begin{proof}
A lower bound for the gap of $\tilde K_{\text{ab}}$ can be obtained by comparing $\bigl(\tilde K_{\text{ab}}\bigr)^2$ with a multiple of $\tilde K_{\text{ab}}$, see Lemma~\ref{lem:1}~i). We start out by writing
\begin{align} 
\bigl( \tilde K_{\text{ab}} \bigr)^2 
&= \, \Bigl( k_{12} + k_{23} + \cdots + k_{N-1\,N} \Bigr)^2 \\
&=  k_{12}^2 + k_{23}^2 + \cdots + k_{N-1\,N}^2 + k_{12} k_{23} + k_{23} k_{12} + \cdots \\
&=  k_{12} + k_{23} + \cdots + k_{N-1\,N} + k_{12} k_{23} + k_{23} k_{12} + \cdots 
\end{align} 
where we used that $k_{bb'}$ is a projector. Suppose now that we find a positive $g$ such that
\begin{equation}
k_{12} + k_{23} + 2 k_{12} k_{23} + 2 k_{23} k_{12} \ge g \bigl( k_{12} + k_{23} \bigr), 
\label{pro:2:3}
\end{equation}
then we may obtain a lower bound for $\tilde K_{\text{ab}}^2$. Projectors $k$ that have no bond in common commute, implying that products of such projectors are positive. We may also decrease the coefficient of a projector. This leads to
\begin{align} 
\bigl( \tilde K_{\text{ab}} \bigr)^2 
&\ge \, k_{12} + k_{23} + \cdots + k_{12} k_{23} + k_{23} k_{12} + \cdots \\
&\ge \,  \Bigl( \frac{1}{2}\, k_{12} + \frac{1}{2}\, k_{23} + k_{12} k_{23} + k_{23} k_{12} \Bigr) \nonumber\\
&\phantom{\ge \, \Bigl\{\ }+ \Bigl( \frac{1}{2}\, k_{23} + \frac{1}{2}\, k_{34} + k_{23} k_{34} + k_{34} k_{23} \Bigr) + \cdots \nonumber \\
&\phantom{\ge \, \Bigl\{\ }+ \Bigl( \frac{1}{2}\, k_{N-2\, N-1} + \frac{1}{2}\, k_{N-1\, N} + k_{N-2\,N-1} k_{N-1\,N} + k_{N-1\,N} k_{N-2\,N} \Bigr) \\
&\ge g\, \Bigl\{ \Bigl( \frac{1}{2}\, k_{12} + \frac{1}{2}\, k_{23} \Bigr) + \Bigl( \frac{1}{2}\, k_{23} + \frac{1}{2}\, k_{34} \Bigr) + \cdots \nonumber \\
&\phantom{\ge \, \Bigl\{\ }+ \Bigl( \frac{1}{2}\, k_{N-2\, N-1} + \frac{1}{2}\, k_{N-1\, N} \Bigr)\Bigr\} \\
&\ge \frac{g}{2}\, \tilde K_{\text{ab}}. 
\end{align} 
It remains to obtain a $g$ satisfying~(\ref{pro:2:3}). As the $k$'s are projectors, the condition is equivalent to
\begin{equation} 
\bigl( k_{12} + k_{23} \bigr)^2 \ge \frac{1+g}{2}\, \bigl( k_{12} + k_{23} \bigr). 
\end{equation} 
The eigenvalues of $k_{12} + k_{23}$ are $\{0,2,\frac{3+\gamma^2}{2(1+\gamma^2)}, \frac{1+3\gamma^2}{2(1+\gamma^2)} \}$. This leads to the estimate
\begin{equation} 
\r{Gap}\bigl(\tilde{\c L}_{\text{ab}}\bigr) \ge \frac{\gamma^2}{1+\gamma^2}. 
\end{equation} 
\end{proof}

It remains to investigate the case $\Gamma_{\text{ab}} = \emptyset$, i.e.\ to consider the action of $\tilde{\c L}$ on the space of linear combinations of matrix units of the form $|\b b\>\<\b b'|$ where $\b b$ is an admissible configuration of bonds and where $\b b'$ is the bond configuration obtained from $\b b$ by flipping every bond. The configuration $\b b'$ should also be admissible which is impossible for odd $N$. For even $N$ essentially a similar approach to the one in Proposition~\ref{pro:2} can be used to obtain that $-\tilde {\c L}\geq 1/2$ on this subspace. In order to keep the proof concise we shall assume in the sequel that $N$ is odd. Finally, to finish the proof of Proposition~\ref{pro:1}, it remains to show the ergodicity of $\tilde{\c L}$.

\begin{lemma}[Ergodicity]
\label{lem:6} 
The generator $\tilde{\c L}$ is ergodic on $\c B(\c H_+)$.
\end{lemma}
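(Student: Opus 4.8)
The plan is to show that $\r{Ker}(\tilde{\c L}) = \Cx\,\idty$, which is exactly ergodicity. First I would record the kernel as a commutant. Since $\tilde{\c L}$ is a sum of negative self-adjoint contributions, the identity~(\ref{liopos}), applied termwise to~(\ref{eq:gen-L}), gives $-\<X \,,\, \tilde{\c L}(X)\>_\beta$ as a sum of squared norms of $[a_j \,,\, X]$, $[a_j^\dagger \,,\, X]$ and $[a^0_j \,,\, X]$. Hence $X \in \r{Ker}(\tilde{\c L})$ if and only if $X$ commutes with every $a_j$, $a_j^\dagger$ and $a^0_j$ for $j = 2,3,\ldots,N$, so that ergodicity is precisely the irreducibility of the action of these operators on $\c H_+$.

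Next I would confine any kernel element to the diagonal block using Lemma~\ref{lem:5}. In the proof of Proposition~\ref{pro:1} it was shown, via~(\ref{eq:Gamma-int}), that $-\tilde{\c L} \ge h_-$ on every block $\c B(\c H_+)(\Lambda)$ with $\Lambda$ a proper nonempty subset: the surviving couplings $j = 2,\ldots,N$ form a connected open chain of bonds $b_1, b_2, \ldots, b_N$, so any such bipartition has $\Gamma_{\text{int}} \neq \emptyset$, while for odd $N$ the globally-flipped (pure-flip) block $\Lambda = \emptyset$ is empty. Since the blocks are mutually orthogonal and invariant, an element of $\r{Ker}(\tilde{\c L})$ must lie entirely in the full block, i.e.\ it is diagonal, $X = \sum_{\b b} x_{\b b}\,|\b b\>\<\b b|$.

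On the diagonal block $\tilde{\c L}$ reduces to the reversible classical generator $-\tilde K_{\r{ab}}$ of Proposition~\ref{pro:2}, whose elementary jumps, read off from~(\ref{part}), flip a neighbouring pair $++ \leftrightarrow --$ and swap $+- \leftrightarrow -+$, all with strictly positive rates. Commuting the diagonal $X$ with $a_j$ and $a^0_j$ forces $x_{\b b} = x_{\b b'}$ whenever $\b b$ and $\b b'$ differ by one such move, so that $x$ is constant on each communicating class of this Markov chain. It therefore remains only to prove irreducibility on admissible configurations. I would do this by a particle-annihilation argument: viewing the minuses as particles on the open chain, the swap transports an isolated minus past a neighbouring plus and the pair-flip creates or removes two adjacent minuses; taking the two right-most minuses, sliding one leftward until it abuts the other and then annihilating the pair reduces the (necessarily even) number of minuses by two, so every admissible configuration connects to the all-plus configuration. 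This yields a single communicating class, whence $x_{\b b}$ is constant and $X = \lambda\,\idty$.

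The main obstacle is this last connectivity step, and specifically the concern that deleting the site-$1$ coupling, which links $b_N$ and $b_1$, could split the configuration space. This is dispatched by the observation that the remaining couplings still form a connected chain and that admissibility fixes the parity of the number of minuses, so the wrap-around move is never required to annihilate them; consistently, for odd $N$ the fully-flipped configuration is inadmissible, which is exactly why the pure-flip block dropped out above.
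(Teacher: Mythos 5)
Your proof is correct, but it takes a genuinely different route from the paper's. The paper's own proof is a two-line commutant computation: it observes that $\r{Ker}(\tilde{\c L})$ is the commutant of the Davies operators $\{a_j, a_j^\dagger, a^0_j\}_{j=2}^N$, notes that this coincides with the commutant of $\{\sigma^x_j,\, Z_b + Z_{b'}\}$, and then simply asserts that for $N$ odd this set generates all of $\c B(\c H_+)$, so the commutant is trivial. You start from the same commutant characterization (correctly derived from~(\ref{liopos}) applied termwise to~(\ref{eq:gen-L})), but then you argue structurally: the block decomposition of Lemma~\ref{lem:5} together with the strictly positive bound~(\ref{eq:Gamma-int}) on every block with $\Gamma_{\text{int}} \neq \emptyset$ confines any kernel element to the diagonal block (the pure-flip block being empty for $N$ odd), and on the diagonal the problem becomes irreducibility of the classical jump chain generated by the moves~(\ref{part}), which you establish by an explicit transport-and-annihilate argument on the open chain of bonds. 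What your version buys is that it actually proves the generation/irreducibility claim that the paper leaves as an assertion (``this turns out to be''), using only machinery already set up for Proposition~\ref{pro:1} and~\ref{pro:2}; what it costs is length. Both arguments restrict to $N$ odd for the same reason, and your closing remark correctly identifies why removing the site-$1$ coupling is harmless: the remaining couplings still connect all $N$ bonds, and parity of the minuses is conserved, so the wrap-around move is never needed.
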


\begin{proof}
As in the analysis of the gap of the generator, we shall assume that $N$ is odd, this simplifies the arguments. The general case can however be handled as well. The commutant of the Davies operators $a^0_j$ and $a^\pm_j$, $j=2,3,\ldots,N$ is the same as the commutant of $\bigl\{ \sigma^x_j, Z_{b}+ Z_{b'}  \text{ with } j = \{b,b'\} \bigr\}$. As $N$ is odd this turns out to be the commutant of $\c B(\c H_+)$ which is trivial.
\end{proof}

\section{Kitaev's 2D model}
\label{s6}

Even if Kitaev's 2D model is more complicated than the Ising model on a chain many ideas in the analysis of it's thermal stability are quite similar, certainly on the mathematical level. We shall stress this similarity by following the approach used for the Ising case.

\subsection{The model}
\label{s6:1}

Kitaev's 2D model lives on a $L \times L$ toroidal lattice $\Lambda$. The microscopic constituents are spins located at the midpoint of the edges of the lattice and represented by filled and open dots in Fig.~\ref{fig:1}, so there are $2 L^2$ of them. The interactions between the spins are given by \emph{star} and \emph{plaquette} terms $X_s$ and $Z_p$. A star is a cross whose vertical vertices lie on the lattice of black dots, while the vertical vertices of a plaquette lie on the lattice of white dots. The actual star and plaquette observables are
\begin{equation}
X_s = \prod_{j\in s} \sigma_j^x \enskip\text{and}\enskip Z_p = \prod_{j\in p} \sigma_j^z. 
\label{s&p}
\end{equation}
Stars and plaquettes correspond to the grey shapes in the figure. As stars and plaquettes have either 0 or 2 sites in common, $[X_s, Z_p] = 0$. So, the algebras $\c A_{\r X}$, $\c A_{\r Z}$, and $\c A_{\r{XZ}}$ generated by the $X_s$, the $Z_p$, and by both together are Abelian.

The Hamiltonian of the model is
\begin{equation}
H^{\text{Kit}}_\Lambda = - \sum_s J\, X_s  - \sum_p J\, Z_p,\enskip J > 0. 
\label{kit}
\end{equation}
Similarly to the Ising model the ground states are totally unfrustrated: all $X_s$ and $Z_p$ have expectation 1. This is actually not sufficient to fully determine the state of all spins as the star and plaquette observables are not independent: because of the periodic boundary conditions they satisfy
\begin{equation} 
\prod_s X_s = \idty \enskip\text{and}\enskip \prod_p Z_p = \idty.
\end{equation} 
As a consequence two topological qubit freedoms are left which may be used for encoding. The Hamiltonian~(\ref{kit}) can be chosen more generically by multiplying the individual star and plaquette observables by positive but otherwise arbitrary coefficients, this will not change the set of ground states. Here too, it is natural to consider the commutant of such a generic Hamiltonian which consists of a product of two qubit algebras and $\c A_{\r{XZ}}$. This is seen quite explicitly by introducing, similarly to~(\ref{isingqubit}), observables for two encoded qubits
\begin{equation}
\begin{split}
&\s X_1 = \prod_{j\in c_1} \sigma_{j'}^x, \qquad \s X_2 = \prod_{j\in c_2} \sigma_{j'}^x \\ 
&\s Z_1 = \prod_{j\in d_1} \sigma_j^z,\qquad \s Z_2 = \prod_{j\in d_2} \sigma_j^z.
\end{split}
\label{kitaevqubits}
\end{equation}
Here $c_1,\ d_1,\ c_2$ and $d_2$ are the loops shown in Fig.~\ref{fig:1}. Unlike for the Ising ring, all qubit observables are very delocalized.

\begin{figure}
\begin{center}
\includegraphics[width=.45\textwidth]{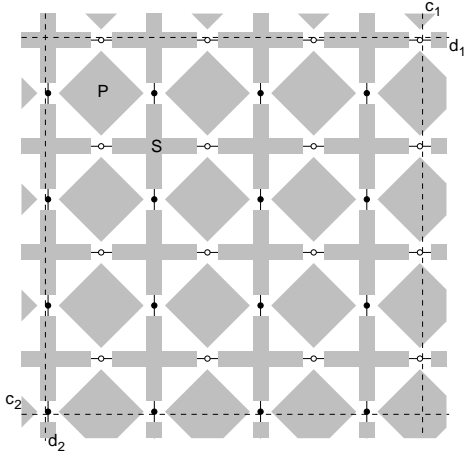}
\caption{Kitaev's lattice\label{fig:1}}
\end{center}
\end{figure}

Let us divide the set of spins into four disjoint subsets, see Fig.~\ref{fig:3}: the snake, the comb, spin 1 and spin 2. Spin 1 is located at the crossing of $\s X_1$ and $\s Z_1$ and similarly for spin 2. Note that the qubit $\s X_1$ has been modified a little, so that it closely follows the snake. 

Let $\c A^{\text{full}}_{\r p}$ be the algebra is generated by all plaquette observables and by the $\sigma^x_j$ where $j$ belongs to the snake while $\c A^{\text{full}}_{\r s}$ is generated by the star observables and by the $\sigma^z_j$ with $j$ belonging to the comb. It is obvious from the construction that these algebras commute and that they are both isomorphic to a full matrix algebra of dimension $2^{L^2-1}$. Indeed, considering $\c A^{\text{full}}_{\r p}$ first we have $L^2-1$ independent commuting plaquette observables. Admissible flips of these plaquette observables should comply with condition~(\ref{s&p}). Now any such flip can be obtained by concatenating flips induced by $\sigma^x_j$ with $j$ belonging to the snake. A similar reasoning holds for $\c A^{\text{full}}_{\r s}$. Therefore the commutant of these two algebras is a 4 dimensional matrix which may be identified with the tensor product of two qubit algebras $\c Q_1$ and $\c Q_2$ generated by the logical qubits~(\ref{kitaevqubits}). We have therefore shown that

\begin{figure}
\begin{center}
\includegraphics[width=.45\textwidth]{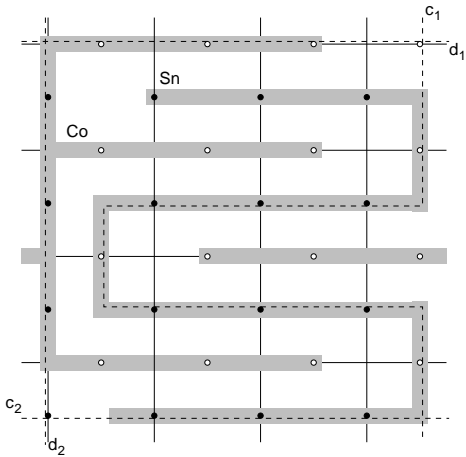}
\caption{Partitioning Kitaev's lattice\label{fig:3}}
\end{center}
\end{figure}

\begin{lemma}
\label{lem:7} 
The algebra of observables of the 2D Kitaev model on a toric $L\times L$ lattice can be decomposed into the tensor product
\begin{equation} 
\c Q_1 \otimes \c Q_2 \otimes \c A^{\r{full}}_{\r p} \otimes \c A^{\r{full}}_{\r s}.
\label{eq:kitaev-algebra}
\end{equation} 
\end{lemma}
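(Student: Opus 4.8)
\textbf{Proof plan for Lemma~\ref{lem:7}.}

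The plan is to exhibit the claimed tensor-product decomposition by identifying a commuting pair of full matrix algebras whose joint commutant is exactly $\c Q_1 \otimes \c Q_2$, and then invoking the double commutant structure for finite-dimensional algebras. First I would count degrees of freedom. The full observable algebra is the $2^{2L^2}$-dimensional matrix algebra on the $2L^2$ physical spins. The two building blocks $\c A^{\r{full}}_{\r p}$ and $\c A^{\r{full}}_{\r s}$ are, by the construction preceding the statement, each isomorphic to a full matrix algebra of dimension $2^{L^2-1}$; this rests on the fact that there are $L^2-1$ independent plaquette (respectively star) observables after imposing the single relation $\prod_p Z_p = \idty$ (respectively $\prod_s X_s = \idty$), and that the flips generated by the snake $\sigma^x_j$'s (respectively the comb $\sigma^z_j$'s) act transitively and freely on the corresponding flip configurations. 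Together with the two qubit factors $\c Q_1, \c Q_2$ of dimension $4$ each, the dimensions multiply as $4 \cdot 4 \cdot 4^{L^2-1} \cdot 4^{L^2-1} = 4^{2L^2} = 2^{4L^2}$, matching the dimension of the full matrix algebra. This dimension match is the backbone of the argument.

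Next I would verify that the four algebras mutually commute. By the geometric setup, $\c A^{\r{full}}_{\r p}$ is built from plaquette observables and snake $\sigma^x_j$'s, while $\c A^{\r{full}}_{\r s}$ is built from star observables and comb $\sigma^z_j$'s; since stars and plaquettes overlap in an even number of sites we have $[X_s,Z_p]=0$, and the snake and comb are chosen disjoint (together with spins $1,2$ they partition all spins, per Fig.~\ref{fig:3}), so the remaining cross terms commute because the relevant $\sigma^x$ and $\sigma^z$ act on distinct sites. Likewise I would check that the logical qubit observables~(\ref{kitaevqubits}) commute with both $\c A^{\r{full}}_{\r p}$ and $\c A^{\r{full}}_{\r s}$: the loops $c_i,d_i$ were arranged so that $\s X_i$ is a product of $\sigma^x$ along a cycle homologically independent of the snake flips, and dually for $\s Z_i$, so each logical operator meets every star and plaquette in an even number of sites. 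Thus $\c Q_1 \otimes \c Q_2$ lies in the commutant of $\c A^{\r{full}}_{\r p} \otimes \c A^{\r{full}}_{\r s}$.

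To finish, I would argue that the commutant is \emph{exactly} $\c Q_1 \otimes \c Q_2$, not larger. Because $\c A^{\r{full}}_{\r p}$ and $\c A^{\r{full}}_{\r s}$ are each full matrix algebras, their tensor product is again a full matrix algebra of dimension $4^{L^2-1}\cdot 4^{L^2-1}$, and the commutant of a full matrix subalgebra of $M_n(\Cx)$ is itself a full matrix algebra whose dimension is forced by $\dim(\text{subalgebra})\cdot\dim(\text{commutant}) = n^2$. Plugging in $n = 2^{2L^2}$ gives a commutant of dimension $16$, i.e.\ a $4$-dimensional matrix algebra, which is precisely $\c Q_1 \otimes \c Q_2$ once we know the latter is $16$-dimensional and contained in it. The algebra of observables then factorizes as the tensor product of the mutually commuting full matrix algebra $\c A^{\r{full}}_{\r p} \otimes \c A^{\r{full}}_{\r s}$ and its commutant $\c Q_1 \otimes \c Q_2$, which is the asserted decomposition~(\ref{eq:kitaev-algebra}).

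The main obstacle I anticipate is not the dimension bookkeeping but the transitivity-and-freeness claim underlying the isomorphism $\c A^{\r{full}}_{\r p} \cong M_{2^{L^2-1}}(\Cx)$: one must check carefully that the snake $\sigma^x_j$'s generate \emph{all} admissible plaquette flips and do so without any residual relation beyond $\prod_p Z_p = \idty$, so that no spurious central element survives to enlarge either factor or shrink the commutant. This is where the specific topology of the snake and comb in Fig.~\ref{fig:3}, and in particular the slight modification of $\s X_1$ to follow the snake, becomes essential rather than cosmetic.
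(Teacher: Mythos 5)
Your proposal is correct and follows essentially the same route as the paper, whose ``proof'' is the discussion immediately preceding the lemma: exhibit $\c A^{\r{full}}_{\r p}$ and $\c A^{\r{full}}_{\r s}$ as commuting full matrix algebras of dimension $2^{L^2-1}$ (using that snake/comb single-spin flips generate all admissible plaquette/star flips) and identify their commutant with the four-dimensional matrix algebra $\c Q_1\otimes\c Q_2$. You merely make explicit the dimension bookkeeping and the double-commutant step that the paper leaves implicit, and you correctly flag the transitivity of the snake flips as the one point requiring genuine verification.
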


There is a striking analogy between Lemma~\ref{lem:7} and Lemma~\ref{lem:4}. The plaquettes or stars should be compared to the bonds. Moreover, the operators that flip plaquettes or stars are located on essentially one-dimensional structures: the snake and the comb. This makes the 2D Kitaev model essentially isomorphic to two Ising like models on 1D structures. For the factors, $\c A^{\text{full}}_{\text{p}}$ and $\c A^{\text{full}}_{\text{s}}$ we shall therefore consider the same block decomposition as for $\c A^{\text{full}}_{\text{b}}$:
\begin{equation}
\c A^{\r{full}}_{\r p} = \loplus_\Lambda \c A^{\r{ab}}_{\r p}(\Lambda) \otimes \c F_{\r p}(\Lambda) \enskip\text{and}\enskip \c A^{\r{full}}_{\r s} = \loplus_\Lambda \c A^{\r{ab}}_{\r s}(\Lambda) \otimes \c F_{\r s}(\Lambda).
\label{eq:block-kitaev}
\end{equation}
Moreover, the factor $\c A^{\text{full}}_{\text{p}}$ is fully isomorphic to the full bond algebra $\c A^{\text{full}}_{\text{b}}$ of the 1D Ising model, hence we shall further on use the same notation, keeping in mind that bonds translate into plaquettes. 

\section{Gap for Kitaev's model}

In the Davies generator we now retain only contributions coming from $\sigma^x_j$ and $\sigma^z_j$. 
The generator has the form 
\begin{equation} 
\c L = \sum_{j=1}^N \c L_{x,j} + \sum_{j=1}^N \c L_{z,j} \equiv \c L_x + \c L_z.
\end{equation} 
where $j$ runs over all qubits.  Note that the action of $\c L$ is ergodic and that $\c L_x$ and $\c L_z$ commute. Therefore it is enough to show that e.g.\ $\c L_x$ has a gap. 

We first note that $\c L_x$ is block diagonal, where the blocks are even finer than those determined by formulas~(\ref{eq:block-kitaev}). Namely, we consider for some fixed $\Lambda$ blocks of the form $\c A_{\r s} \otimes F_{\r s} \otimes \c A^{\r{full}}_{\r p} \otimes \s Z_\nu \s X_\mu$, where $F_s$ is a proper flip from $\c F_{\r s}$. One then finds, similarly to formulas (91)--(94) in~\cite{Alicki-Fannes-Horodecki}), that
\begin{equation}
\c L_x \bigl( \c A_{\r s} \otimes F_{\r s} \otimes \c A^{\r{full}}_{\r p} \otimes \s Z_\nu \s X_\mu \bigr) = \c A_{\r s} \otimes F_{\r s} \otimes \c L'_x \bigl( \c A^{\r{full}}_{\r p} \bigr) \otimes \s Z_\nu \s X_\mu \bigr.
\label{eq:signflip}
\end{equation} 
Here $\c L'_x$ is defined as follows: for any $j$ which belongs to the symmetric difference of $S$, i.e.\ the set of qubits where $F_{\r s}$ has $\sigma_z^j$ matrices, and the support of $Z_\nu$ the term $\c L'_{x,j}(A)$ has the form $-\frac{1}{2}\, (a^\dagger a A +  A a^\dagger a) - a^\dagger A a$. This is so because for such a $j$, the $\sigma_x^j$ term from the generator has to be commuted over a $\sigma_z^j$ occurring in $F_{\r s}$ or in $\s Z_\nu$. This results in flipping the sign
of the contribution $ a^\dagger (A) a$. For all other $j$'s, the generator is unchanged: $\c L'_{x,j} = \c L_{x,j}$.

We now have to show that $\c L_x$ has a gap on each block. To this end we introduce $\c L_x^{\r{snake}}$  which consists solely of contributions from sites belonging to the snake, see Fig.~\ref{fig:3}. We shall use that 
\begin{equation}
\c L_x^{\r{snake}} \bigl( \c A_{\r s} \otimes F_{\r s} \otimes \c A^{\r{full}}_{\r p} \otimes \s Z_\nu \s X_\mu \bigr) = \c A_{\r s} \otimes F_{\r s} \otimes \c L_x^{\r{snake}} \bigl( \c A^{\r{full}}_{\r p} \bigr) \otimes \s Z_\nu \s X_\mu 
\end{equation}
and that 
\begin{equation} 
\r{Gap}\bigl( \c L_x^{\r{snake}} \bigr) \Bigr|_{\c A^{\r{full}}_{\r p}} \ge \frac{h_-}{2} \quad\text{and}\quad \r{Ker} \bigl( \c L_x^{\r{snake}} \bigr) \Bigr|_{\c A^{\r{full}}_{\r p}} = \{ \idty \}.
\end{equation}
Here $h_-$ is the same constant as in~(\ref{eq:const-h}). This follows from Theorem~\ref{thm:1} since $\c L_x^{\r{snake}}$ restricted to $\c A^{\r{full}}_{\r p}$ is fully isomorphic the Ising model on $\c A^{\r{full}}_{\r b}$. 

We shall now consider three types of blocks. For the first two we shall bound $-\c L_x$ from below, 
exactly like in the Ising model for blocks of type $\s Z \otimes \c A^{\r{full}}_{\r b}$. In the third type of block $-\c L_x$ will be gapped by the same estimate as in~(\ref{pro:2:1}).

\textit{Case i).} Consider all blocks where $\s Z_1$ is present, the case where $\s Z_2$ is present is analogous. We then remove lot of interactions and retain
\begin{equation}
\c L''_x = \c L_x^{\r{snake}} + \c L_{q_1}^x.
\end{equation}
Note that $\c L''_x$ has support outside the comb. We have therefore
\begin{equation}
\begin{split}
\c L''_x \bigl( \c A_{\r s} \otimes F_{\r s} \otimes \c A^{\r{full}}_{\r p} \otimes \s Z_1 \s X_\mu \bigr) = 
& \c A_{\r s} \otimes F_{\r s} \otimes \c L^{\r{snake}}_x \bigl( \c A^{\r{full}}_{\r p} \bigr) \otimes \s Z_1 \s X_\mu \\
&+ \c A_{\r s} \otimes F_{\r s} \otimes \c L_{q_1} \bigl( \c A^{\r{full}}_{\r p} \otimes \s Z_1 \bigr) \s X_\mu .
\end{split}
\end{equation}
We now apply Lemma~\ref{lem:2}. From discussion below~(\ref{eq:signflip}) it follows that 
\begin{equation}
\< \s Z_1 \,,\, \c L_{q_1}(\s Z_1)\> \ge 2 \< \idty \,,\, G \>_\beta \ge 2 h_- 
\end{equation}
with $G$ as in~(\ref{eq:G}). Within the block under consideration we obtain exactly the same bound as for the Ising model in Theorem~\ref{thm:1}
\begin{equation}
-\c L''_x \ge \frac{\gamma^4}{3}
\end{equation}
where we have used $\norm{L_{q_1}} \le 2$. This implies the same estimate for the full generator $\c L_x$ on this block 
\begin{equation}
-\c L_x \ge \frac{\gamma^4}{3}.
\label{lbl}
\end{equation}

\textit{Case ii).} Consider now blocks where $Z_\nu$ is not present but $F_s$ is. We then take 
\begin{equation}
\c L''_x = \c L^{\r{snake}}_x + \c L_{x,q_0}
\label{eq:caseii}
\end{equation}
where $q_0$ is any spin belonging to the set $S$ determined by the flip $F_s$. A similar argument as for Case~i) then yields the same estimate~(\ref{lbl}).

\textit{Case iii).} Finally, consider blocks where neither $F_s$ nor $Z_\nu$ is present. We then simply take $\c L^{\r{snake}}_x$ itself. For this kind of block $\c L^{\r{snake}}_x$ has a gap $\frac{h_-}{2}$ above its kernel which is generated by $\c A_{\r s} \otimes X_\mu$. One immediately sees that this coincides with the kernel of $\c L_x$ on this block. Therefore by Lemma~\ref{lem:1} (ii) we have 
\begin{equation}
\r{Gap} \bigl( \c L_x \bigr) \Bigr|_{\c A_{\r s} \otimes \c A^{\r{full}}_{\r p} \otimes \s X_\mu} \ge \frac{h_-}{2}.
\end{equation}

Combining the three cases, we conclude that $\c L_x$ has a gap above its kernel which satisfies 
\begin{equation}
\r{Gap} \bigl( \c L_x \bigr) \ge \frac{\gamma^4}{3}.
\end{equation}
We have therefore completed the proof of the following theorem:

\begin{theorem}[Gap Kitaev's 2D model]
\label{thm:2} 
Let $\c L$ be the Davies generator for the 2D Kitaev model of size $L\times L$, then
\begin{equation} 
\r{Gap}(\c L) \ge \frac{1}{3}\, \r e^{-8\beta J}.
\end{equation} 
\end{theorem}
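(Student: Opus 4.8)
The plan is to mirror, almost line by line, the proof of Theorem~\ref{thm:1}, using the tensor decomposition of Lemma~\ref{lem:7} to present the 2D Kitaev model as two essentially one-dimensional Ising-type problems living on the snake and the comb. First I would retain in the master Hamiltonian only the contributions coming from the $\sigma^x_j$ and $\sigma^z_j$ coupling operators; since the resulting generator is still ergodic it has the same kernel as the full one, so Lemma~\ref{lem:1}~ii) guarantees that a lower bound on its gap is a lower bound on $\r{Gap}(\c L)$. Writing this retained generator as $\c L = \c L_x + \c L_z$ and noting that $\c L_x$ and $\c L_z$ commute, Lemma~\ref{lem:1}~iii) reduces the task to bounding the gap of $\c L_x$ alone, the $\c L_z$ part being identical under the $\r X\leftrightarrow\r Z$ symmetry of~(\ref{kit}); thus $\r{Gap}(\c L)\ge\min\bigl(\r{Gap}(\c L_x),\r{Gap}(\c L_z)\bigr)=\r{Gap}(\c L_x)$.

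Next I would exhibit the block-diagonal structure of $\c L_x$ with respect to a refinement of~(\ref{eq:block-kitaev}): blocks are labelled by a fixed Abelian star part $\c A_{\r s}$, a star-flip sector $F_{\r s}$ and a qubit label $\s Z_\nu\s X_\mu$, while the entire plaquette factor $\c A^{\r{full}}_{\r p}$ is left free. On such a block $\c L_x$ acts only on the plaquette factor, but through a \emph{modified} generator $\c L'_x$ in which the term $a^\dagger A a$ has its sign reversed at exactly those sites lying in the symmetric difference of the flip support $S$ and the support of $Z_\nu$, as in~(\ref{eq:signflip}). Verifying this is the step I expect to be the main obstacle: one must check that commuting a $\sigma^x_j$ term past the $\sigma^z_j$ factors carried by $F_{\r s}$ or $\s Z_\nu$ produces precisely these sign flips and nothing worse, so that the block really does reduce to a sign-decorated copy of the 1D bond generator. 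Isolating the snake part $\c L^{\r{snake}}_x$, which under the isomorphism $\c A^{\r{full}}_{\r p}\cong\c A^{\r{full}}_{\r b}$ is exactly the bond generator of Section~\ref{s5}, I can import Theorem~\ref{thm:1} and Proposition~\ref{pro:1} to conclude that $\c L^{\r{snake}}_x$ restricted to $\c A^{\r{full}}_{\r p}$ is gapped by $h_-/2$ with kernel $\{\idty\}$.

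Finally I would split the blocks into three cases and treat each as in Theorem~\ref{thm:1}. When a logical operator $\s Z_1$ or $\s Z_2$ is present (Case i), or when a nontrivial flip $F_{\r s}$ occurs (Case ii), I retain $\c L^{\r{snake}}_x$ together with a single extra term $\c L^x_q$ localized at spin~$q_1$ (respectively at some $q_0\in S$) and apply Lemma~\ref{lem:2} with $A=\c L^{\r{snake}}_x$ of gap $h_-/2$ and $B=\c L^x_q$; combined with the bounds $\<\s Z\,,\,\c L^x_q\s Z\>\ge 2h_-$ and $\norm{\c L^x_q}\le 2$ coming from~(\ref{eq:G}), this reproduces the Ising estimate $-\c L_x\ge\gamma^4/3$ on the block. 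In the remaining Case iii, where neither $F_{\r s}$ nor $Z_\nu$ appears, the kernel of $\c L^{\r{snake}}_x$ already coincides with that of $\c L_x$, so Lemma~\ref{lem:1}~ii) gives the bound $h_-/2$ directly. Taking the minimum over the three cases yields $\r{Gap}(\c L_x)\ge\min\bigl(\gamma^4/3,\,h_-/2\bigr)=\gamma^4/3$ since $h_-\ge\gamma^2$, and the reduction of the first paragraph then upgrades this to $\r{Gap}(\c L)\ge\frac{1}{3}\,\r e^{-8\beta J}$, as claimed.
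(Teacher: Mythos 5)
Your proposal follows the paper's own argument essentially step for step: the same reduction to $\c L_x$ via ergodicity, commutativity and Lemma~\ref{lem:1}, the same block decomposition with the sign-flipped generator $\c L'_x$, the same isolation of $\c L^{\r{snake}}_x$ with gap $h_-/2$ imported from the Ising analysis, and the same three-case application of Lemma~\ref{lem:2} yielding $\gamma^4/3$. It is correct and not a genuinely different route, so there is nothing further to compare.
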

\section{Conclusion}

We presented in this paper a rigorous analysis confirming the heuristic arguments in~\cite{DennisKLP2002-4DKitaev,AlickiH2006-drive} that the 2D Kitaev model is not stable with respect to thermal fluctuations. The physical mechanism behind this instability is the absence of interactions between the thermal excitations. Indeed, Kitaev's 2D model can be translated into a non interacting quasi-particle model (anyons), similarly to non-interacting kinks in the 1D Ising model. Therefore, except for a gas of bosons where the statistics plays a major role, only systems with truly interacting thermal particles like the 2D Ising or the 3D and 4D  Kitaev models could support delocalized metastable observables. In this paper the analysis of the dissipative generator is translated into the analysis of the ground state of a quantum spin Hamiltonian $K$ called master Hamiltonian. Quantum spin chain techniques can then be used to obtain lower bounds on the spectral gap of $K$, leading to instability results as in this paper. 

The same technique can however also be applied to obtain for suitable models upper bounds for the spectral gap of $K$ and to unravel the form of possible metastable observables. Such models and their relevance for the implementation of quantum memory is the topic of a forthcoming publication.

\noindent
\textbf{Acknowledgements} \\
This work is partly supported by the EU Project QAP-IST contract 015848 (RA), the Polish-Flemish bilateral grant BIL~05/11 (RA \& MF), the Belgian Interuniversity Attraction Poles Programme P6/02 (MF), and EC IP SCALA (MH). Part of the work was done while two of the authors (RA and MF) participated to the 2008 MHQP programme at the IMS, NUS, Singapore, they are both grateful for the stimulating atmosphere.


\begin{thebibliography}{10}

\bibitem{KitaevFT}
Kitaev~A.~Y.
2003
Fault-tolerant quantum computation by anyons
\emph{Annals Phys.}~\textbf{303} 2--30
\texttt{quant-ph/9707021}

\bibitem{DennisKLP2002-4DKitaev}
Dennis~E., Kitaev~A.~Y., Landahl~A., and Preskill~J.
2002
Topological quantum memory
\emph{J.\ Math.\ Phys.}~\textbf{43} 4452--4505
\texttt{quant-ph/0110143}

\bibitem{AlickiH2006-drive}
Alicki~R. and Horodecki~M.
2006
Can one build a quantum hard drive? A no-go theorem for storing quantum information in equilibrium systems
\texttt{quant-ph/0603260}

\bibitem{Alicki-Fannes-Horodecki}
Alicki~R., Fannes~M., and Horodecki~M.
2007 
A statistical mechanics view on Kitaev's proposal for quantum memories
\emph{J. Phys. A: Math. Theor.}~\textbf{40} 6451-6467  
\texttt{quant-ph/0702102v1}

\bibitem{NussinovO2007b}
Nussinov~Z. and Ortiz~G. 2007
Autocorrelations and thermal fragility of anyonic loops in topologically quantum ordered systems
\texttt{arXiv:0709.2717}

\bibitem{Kay2008-K2D}
Kay~A.
2008 
The non-equilibrium reliability of quantum memories
\texttt{arXiv:0807.0287}

\bibitem{Davies1974-WCL}
Davies~E.~B. 
1974
Markovian master equations
\emph{Commun.\ Math.\ Phys.}~\textbf{39} 91--110

\bibitem{Alicki-Lendi}
Alicki~R. and Lendi~K. 
2007
\emph{Quantum Dynamical Semigroups and Applications, II edition}
(Berlin: Springer)

\bibitem{Spohn1977-ergod}
Spohn~H.
1977
An algebraic condition for the approach to equilibrium of an open $N$-level
system
\emph{Lett.\ Math.\ Phys.}~\textbf{2} 33--38

\bibitem{Frigerio1978-ergod}
Frigerio~A. 
1977
Stationary states of quantum dynamical semigroups
\emph{Commun.\ Math.\ Phys.}~\textbf{63} 269--276

\bibitem{Alicki-Fannes-fidelity}
Alicki~R. and Fannes~M. 
2008
Decay of fidelity in terms of correlation functions 
\texttt{arXiv:0809.4180}

\end{thebibliography}
\end{document}